\documentclass[12pt]{amsart}
\usepackage[sans]{dsfont}
\usepackage{amsfonts,amssymb,amsbsy,amsmath,amsthm,enumerate}

\topmargin -1cm
\textheight21.4cm
\textwidth15.7cm
\oddsidemargin 0.5cm
\evensidemargin 0.5cm
\parindent0cm

\numberwithin{equation}{section}

\newtheorem{theorem}{Theorem}[section]
\newtheorem{proposition}[theorem]{Proposition}
\newtheorem{lemma}[theorem]{Lemma}

\newtheorem{corollary}[theorem]{Corollary}

\theoremstyle{definition}

\usepackage{amsfonts}
\usepackage{amsmath}
\usepackage{bigints}
\usepackage{bm}
\usepackage{dsfont}


\def\beq{\begin{equation}}
\def\eeq{\end{equation}}
\newcommand{\bea}{\begin{eqnarray}}
\newcommand{\eea}{\end{eqnarray}}
\newcommand{\beas}{\begin{eqnarray*}}
\newcommand{\eeas}{\end{eqnarray*}}

\newcommand{\bel}{\begin{equation} \label}
\newcommand{\ee}{\end{equation}}

\newcommand{\bethl}{\begin{theorem} \label}

\newcommand{\beprl}{\begin{proposition} \label}
\newcommand{\epr}{\end{proposition}}

\newcommand{\belel}{\begin{lemma} \label}
\newcommand{\ele}{\end{lemma}}

\newcommand{\becol}{\begin{corollary} \label}
\newcommand{\eco}{\end{corollary}}

\newcommand{\bepf}{\begin{proof} \smartqed}
\newcommand{\epf}{\qed \end{proof}}





\newcommand{\bx}{{\bf x}}
\newcommand{\bw}{{\bf w}}
\newcommand{\bt}{{\bf t}}
\newcommand{\bk}{{\bf k}}
\newcommand{\bl}{{\bf l}}


\newcommand{\bxi}{{\bm{\xi}}}
\newcommand{\btau}{{\bm{\tau}}}

\newcommand{\one}{\mathds{1}}



\newcommand{\I}{i}
\newcommand{\E}{e}
\newcommand{\D}{d}

\newcommand{\rd}{{\mathbb R}^{2}}

\newcommand{\re}{{\mathbb R}}

\newcommand{\C}{{\mathbb C}}

\newcommand{\N}{{\mathbb N}}
\newcommand{\sone}{{\mathbb S}^1}
\newcommand{\Z}{{\mathbb Z}}
\newcommand{\gB}{{\mathfrak{B}}}
\newcommand{\gD}{{\mathfrak{D}}}

\newcommand{\gh}{{\mathfrak{h}}}

\newcommand{\gC}{{\mathfrak{C}}}
\newcommand{\gM}{{\mathfrak{M}}}
\newcommand{\gS}{{\mathfrak{S}}}

\newcommand{\gV}{{\mathfrak{V}}}

\newcommand{\cC}{{\mathcal C}}
\newcommand{\cD}{{\mathcal D}}
\newcommand{\cE}{{\mathcal E}}
\newcommand{\cF}{{\mathcal F}}
\newcommand{\cG}{{\mathcal G}}

\newcommand{\cL}{{\mathcal L}}
\newcommand{\cM}{{\mathcal M}}
\newcommand{\cN}{{\mathcal N}}
\newcommand{\cO}{{\mathcal O}}
\newcommand{\cQ}{{\mathcal Q}}
\newcommand{\cR}{{\mathcal R}}
\newcommand{\cS}{{\mathcal S}}
\newcommand{\cT}{{\mathcal T}}
\newcommand{\cU}{{\mathcal U}}
\newcommand{\cV}{{\mathcal V}}
\newcommand{\cW}{{\mathcal W}}
\newcommand{\cZ}{{\mathcal Z}}

\newcommand{\pdos}{$\Psi$DOs }

\newcommand{\veps}{{\varepsilon}}

\newcommand{\opw}{{\rm Op}^{\rm w}}
\newcommand{\tqw}{{\cT}_q}
\newcommand{\opwv}{{\rm Op}^{\rm w}(\cV)}
\newcommand{\tqwv}{{\cT}_q(\cV)}
\newcommand{\opaw}{{\rm Op}^{\rm aw}}

\newcommand{\opawv}{{\rm Op}^{\rm aw}(\cV)}

\newcommand{\tvbq}{\tilde{v}_{b,q}}
\newcommand{\vbq}{v_{b,q}}
\newcommand{\wbq}{w_{b,q}}
\newcommand{\obq}{\omega_{b,q}}
\newcommand{\Homu} {H_0}

\newcommand{\gwrdn}{\Gamma_{\rm w}(\re^{2n})}
\newcommand{\gwrd}{\Gamma_{\rm w}(\rd)}
\newcommand{\gwrf}{\Gamma_{\rm w}(\re^4)}
\newcommand{\gawrdn}{\Gamma_{\rm aw}(\re^{2n})}
\newcommand{\gawrd}{\Gamma_{\rm aw}(\rd)}

\begin{document}
\title[Landau Hamiltonians with non-local potentials]{Spectral properties of Landau Hamiltonians with non-local potentials}

\author[E.~C\'ardenas]{Esteban C\'ardenas}
\author[G.~Raikov]{Georgi Raikov}
\author[I.~Tejeda]{Ignacio Tejeda}

\begin{abstract}\setlength{\parindent}{0mm}
We consider the Landau Hamiltonian $H_0$, self-adjoint in $L^2(\rd)$, whose spectrum consists of an arithmetic progression of infinitely degenerate positive  eigenvalues $\Lambda_q$, $q \in \Z_+$. We perturb $H_0$ by a non-local potential written as a bounded pseudo-differential operator  $\opw(\cV)$ with real-valued  Weyl symbol $\cV$, such that
$\opw(\cV) H_0^{-1}$ is compact. We study the spectral properties of the perturbed operator $H_{\cV} = H_0 + \opw(\cV)$. First, we construct symbols $\cV$, possessing a suitable symmetry, such that the operator $H_\cV$ admits an explicit eigenbasis in $L^2(\rd)$, and calculate the corresponding eigenvalues. Moreover, for  $\cV$ which are not supposed to have this symmetry, we study the asymptotic distribution of the eigenvalues of $H_\cV$ adjoining any given $\Lambda_q$. We find that the effective Hamiltonian in this context is the Toeplitz operator $\tqw(\cV) = p_q \opwv p_q$, where $p_q$ is the orthogonal projection onto ${\rm Ker}(H_0 - \Lambda_q I)$, and investigate its spectral asymptotics.
\end{abstract}

\maketitle

{\bf  AMS 2010 Mathematics Subject Classification:} 35P20,  81Q10\\

{\bf  Keywords:} Landau Hamiltonian,  non-local potentials, Weyl pseudo-differential\\ operators, eigenvalue asymptotics, logarithmic capacity

\section{Introduction}
\label{s1} \setcounter{equation}{0}
We consider the Landau Hamiltonian $H_0$, i.e. the 2D Schr\"odinger operator with  constant scalar magnetic field $b>0$, self-adjoint in $L^2(\rd)$.
We have
\bel{D132}
H_0 =
\left(-i\frac{\partial}{\partial x} +  \frac{b y}{2}\right)^2  +
\left(-i\frac{\partial}{\partial y} -  \frac{b x}{2}\right)^2, \quad (x,y) \in \rd.
\ee
 As is well known, the spectrum $\sigma(H_0)$ of the operator $H_0$ consists of eigenvalues of infinite multiplicity
    $$
    \Lambda_q : = b(2q+1), \quad q \in \Z_+ : = \{0,1,2,\ldots\},
    $$
     called {\em the Landau levels} (see \cite{Fo28, La30}).\\
     Let $\opw(\cV)$ be a bounded pseudo-differential operator ($\Psi$DO) with real-valued Weyl symbol $\cV$; then $\opwv$ is self-adjoint in $L^2(\rd)$. We assume moreover that $\opwv$  is relatively compact with respect to $H_0$, i.e. that the operator $\opwv H_0^{-1}$ is compact. Proposition \ref{np90} below contains simple sufficient conditions which guarantee the validity of these general assumptions on $\opwv$.  We will study the spectral properties of the perturbed operator
    $$
    H_\cV : = H_0 + \opwv.
    $$
    By the Weyl theorem on the invariance of the essential spectrum under relatively compact perturbations, we have
    $$
    \sigma_{\rm ess}(H_\cV) = \sigma_{\rm ess}(H_0) = \bigcup_{q=0}^\infty\left\{\Lambda_q\right\}.
    $$
    We will be interested, in particular, in the asymptotics of the discrete spectrum of $H_\cV$ near any fixed Landau level $\Lambda_q$, $q \in \Z_+$. As we will see, generically, the effective Hamiltonian which governs this asymptotic behavior is the Toeplitz-type operator
    \bel{n80a}
    \tqwv : = p_q \, \opwv \, p_q,
    \ee
     considered as an operator in ${\rm Ran}\,p_q = p_q\,L^2(\rd)$, where $p_q$ is the orthogonal projection onto ${\rm Ker}\,(H_0-\Lambda_q I)$.\\
    Let us explain briefly our motivation to study the spectral properties of the operator $H_\cV$. The so called {\em non-local potentials} defined as  appropriate integral operators play an important role in nuclear physics (see e.g. \cite{Fe58, FeWa71, Sa83}). Let us recall that any integral operator in $L^2(\re^n)$, which has a reasonable integral kernel can be represented as a
    Weyl $\Psi$DO (see e.g. \cite[Eq. (23.39)]{Sh01}). That is why, in the mathematical physics literature there is a persistent interest in Schr\"odinger operators  with non-local, in particular, pseudo-differential potentials (see e.g. \cite{Je77, BeCo77, FiPa90}).\\
    On the other hand, during the last three decades, there have been published numerous works on the spectral asymptotics for various types of  perturbations of $H_0$. For example:
     \begin{itemize}
     \item {\em Electric perturbations}, i.e. perturbations of $H_0$ by an additive real multiplier $V$ which plays the role of an electric potential, were considered in \cite{Ra90, Iv98, RaWa02, FiPu06};
         \item {\em Magnetic perturbations}, i.e. perturbations of the constant magnetic field $b$ by a variable one $\tilde{b}$, which involve a first-order differential operator, were investigated in \cite{Iv98, RoTa09};
              \item {\em Metric perturbations}, i.e. perturbations of the Euclidean metric $\left\{\delta_{jk}\right\}_{j,k=1,2}$ by a variable metric $g = \left\{g_{jk}\right\}_{j,k=1,2}$, which involve a second-order operator, were studied in \cite{Iv98, LuRa15}.
                   \end{itemize}
                   The perturbations, i.e. the quantities $V$, $\tilde{b}$ or $g$, considered in \cite{Ra90, Iv98} are of power-like decay at infinity, while those studied in \cite{RaWa02, FiPu06, RoTa09, LuRa15} are of exponential decay or compact support. Recently, several articles,
                   \cite{PuRo07, Pe09, GoKaPe16}, treated the eigenvalue asymptotics for the Landau Hamiltonian defined on the complement of a compact in $\rd$, and equipped with Dirichlet, Neumann, or Robin boundary conditions. In this geometric setting, the effective Hamiltonian which governs the eigenvalue asymptotics near the Landau level $\Lambda_q$ is an integral operator sandwiched between the projections $p_q$, quite similar to the Toeplitz operator $\tqwv$  in \eqref{n80a}.\\
    All these reasons are the source of our motivation to think of a unified approach to the spectral theory of pseudo-differential perturbations of magnetic quantum Hamiltonians. We believe that our present work could be a small but useful step in this direction. \\
    Let us discuss briefly and informally the main results of the article. As already mentioned, most of them concern the eigenvalue distribution of the discrete eigenvalues of $H_\cV$ near the Landau levels. In particular, we compare the characteristic features of the eigenvalue asymptotics for the operator $H_\cV$ with non-local potential $\opwv$, and for $H_V = H_0 + V$ with local potential $V = V(x,y) \in \re$ with $(x,y) \in \rd$; note that if $V$ is local then $\opw(V) = V$.\\
      If $\opwv$ is bounded, $\opwv H_0^{-1}$ is compact, and $\opwv \geq 0$, then the discrete eigenvalues of $H_\cV$ (resp., of $H_{-\cV}$) may accumulate at any $\Lambda_q$, $q \in \Z_+$, only from above (resp., only from below). However, in contrast to local symbols $V$, generally speaking, $\cV \geq 0$ does not imply $\opwv \geq 0$, and $\opwv \geq 0$ does not imply $\cV \geq 0$.  Put
      \bel{fin30}
      I_q^+ : = (\Lambda_q, \Lambda_{q+1}), \; q \in \Z_+, \; I_q^- : = (\Lambda_{q-1}, \Lambda_q), \; q \in \N, \; I_0^- : = (-\infty, \Lambda_0).
     \ee
      Suppose that $\opwv \geq 0$, $\sigma(H_{\pm \cV}) \cap I_q^\pm \neq \emptyset$, set $m_q^\pm : = \#\left\{\sigma(H_{\pm \cV}) \cap I_q^\pm\right\}$,  and denote by
       $\left\{\lambda_{k,q}^+(\cV)\right\}_{k=0}^{m_q^+ - 1}$ (resp., by $\left\{\lambda_{k,q}^-(\cV)\right\}_{k=0}^{m_q^- - 1}$)  the non-increasing (resp., non-decreasing) set of the eigenvalues of $H_{\cV}$ (resp., $H_{-\cV}$), lying on the interval $I_q^+$
    (resp., on $I_q^-$), $q \in \Z_+$.\\
     If $V \neq 0$ is a local potential of a definite sign, which decays at infinity and is sufficiently regular, then {\em all the Landau levels} $\Lambda_q$, $q \in \Z_+$, are accumulation points of the discrete spectrum of $H_V$, with a maximal admissible accumulation rate. More precisely, if the local potential $V \geq 0$ does not vanish identically and satisfies, say, $V \in C(\rd)$, and $\lim_{|{\bf x}| \to \infty} V({\bf x}) = 0$, then the results of \cite{RaWa02} imply $m_q^\pm = \infty$ for every $q \in \Z_+$, i.e. the discrete eigenvalues of $H_{\pm V}$ accumulate at each Landau level $\Lambda_q$, and we have
    \bel{fin1}
    \liminf_{k \to \infty} \frac{\ln{\left(\pm \left(\lambda_{k,q}^{\pm}(V) - \Lambda_q\right)\right)}}{k \, {\ln{k}}} \geq -1, \quad q \in \Z_+,
    \ee
    i.e. the  eigenvalues of $H_{\pm V}$ cannot accumulate arbitrarily fast at $\Lambda_q$, even if $V$ has a compact support.
     If we assume in addition that $\|V\|_{L^\infty(\rd)} < 2b$, the distance between any two consecutive Landau levels, then it follows from the results of \cite{KlRa09} that ${\rm Ker}\,(H_{\pm V}- \Lambda_q I) = \{0\}$, i.e. the perturbations $\pm V$ transform the infinite-dimensional subspace ${\rm Ker}\,(H_0- \Lambda_q I)$ into trivial subspaces ${\rm Ker}\,(H_{\pm V}- \Lambda_q I)$.\\
    In contrast to these properties of $H_V$ with local decaying sign-definite $V$, we construct in Proposition \ref{np51} a Schwartz-class symbol $\cV : \re^4 \to \re$ such that $\opwv \geq 0$, and the number of the eigenvalues of $H_{-\cV}$, lying on the interval $I_q^-$, $q \in \Z_+$, and counted with the multiplicities, is equal to any given $m_q \in \Z_+ \cup \{\infty\}$. If $m_q < \infty$, then ${\rm dim \, Ker}\,(H_{-\cV}- \Lambda_q I) = \infty$, i.e. the Landau level $\Lambda_q$ remains an eigenvalue of $H_{-\cV}$ of infinite multiplicity. If $m_q = \infty$, then the accumulation of $\lambda_{k,q}^-(\cV)$ at $\Lambda_q$ can be arbitrarily fast, i.e. there exists no maximal accumulation rate as in \eqref{fin1}. \\
    Further, one of the main problems dealt with in the physics literature on (non-magnetic) Schr\"odinger operators with non-local potentials, is the existence of an effective local potential $V$ which can replace the non-local one $\opwv$ in a given asymptotic regime (see e.g.
    \cite{CoArMa70, SoKi97, ChMoKiChKiSo14}). Let us describe briefly our construction which is similar to such a replacement. As already mentioned, we prove that the effective Hamiltonian which governs the eigenvalue asymptotics of $H_{\pm \cV}$ with $\opwv \geq 0$ near $\Lambda_q$ is the Toeplitz operator $\tqwv$ defined in \eqref{n80a} (see Proposition \ref{np7}). In Corollary \ref{nf1}, we show that
    $\tqwv$ is unitarily equivalent to the operator $\opw(v_{b,q})$, compact in $L^2(\re)$ whose symbol $v_{b,q} : \rd \to \re$ is a suitable integral transform of the symbol $\cV : \re^4 \to \re$ (see \eqref{j19}). Next, we make the crucial assumption that the operator $\opw(v_{b,q})$ admits an {\em anti-Wick symbol} $\tilde{v}_{b,q}$ i.e. that $v_{b,q}$ is the convolution of $\tilde{v}_{b,q}$ with a Gaussian function (see \eqref{aa1}). Then $\opw(v_{b,q})$ is unitarily equivalent to the Toeplitz operator $p_0 \omega_{b,q} p_0$ with
    \bel{fin20}
    \omega_{b,q}(x,y)  = \tilde{v}_{b,q}(-b^{1/2}y, -b^{1/2}x), \quad (x,y) \in \rd,
    \ee
    (see Corollary \ref{nf10}). Thus, $\omega_{b,q}$ could be regarded as the effective local counterpart of $v_{b,q}$ in the asymptotic analysis of the eigenvalue distribution near $\Lambda_q$, $q \in \Z_+$, for the operator $H_\cV$. We assume further that there exist $r \in \Z_+$ and $\zeta_{b,q,r} : \rd \to [0,\infty)$ which does not vanish identically and has a compact support or decays exponentially at infinity, such that
    \bel{d16}
    \omega_{b,q} = \cD_{b,r} \, \zeta_{b,q,r},
    \ee
    where $\cD_{b,0}$ is the identity, and ${\cD}_{b,r}$ with $r \geq 1$  is a partial differential operator of order $2r$ (see \eqref{d15}). The passage from $\omega_{b,q}$ to $\zeta_{b,q,r}$ is motivated by the fact that $\zeta_{b,q,r}$ might be non-negative even if $\omega_{b,q}$ does not have a definite sign. Then it  follows from \cite[Section 9]{BrPuRa04}, that $p_0 \omega_{b,q} p_0$ is unitarily equivalent to the Toeplitz operator
    $p_r \zeta_{b,q,r} p_r$. Using this fact, we obtain several asymptotic terms of $\ln{(\pm(\lambda_{k,q}^\pm(\cV) - \Lambda_q))}$ as $k \to \infty$ for compactly supported
    $\zeta_{b,q,r}$ (see Theorem \ref{nth1}), and for of exponentially decaying $\zeta_{b,q,r}$ (see Theorem \ref{nth2}).  Due to the fast decay of  $\zeta_{b,q,r}$, the standard pseudo-differential techniques are not applicable and that is why we use and develop the methods of \cite{RaWa02}, \cite{FiPu06}, and \cite{LuRa15}.
    Finally, we drop our assumption that $\opwv$ has a definite sign but assume that $v_{b,q}$ has a power-like decay at infinity, and in Theorem \ref{oth1} we obtain the main asymptotic term of the local eigenvalue counting function as the energy approaches the Landau level $\Lambda_q$, $q \in \Z_+$. Here, many traditional $\Psi$DO techniques are applicable, in particular, we use and extend the methods developed in \cite{Ra90} and \cite{DaRo87}.\\
    The article is organized as follows. In Section \ref{s2} we summarize the necessary facts from the general theory of $\Psi$DOs with Weyl and anti-Wick symbols. Section \ref{s3} contains the description of several unitary operators which map $H_\cV$ to operators which are more accessible and easier to investigate. In particular,  we show that  $\tqwv$, $q \in \Z_+$, is unitarily equivalent to  $\opw(v_{b,q})$. In Section \ref{s5} we deal with Weyl and anti-Wick $\Psi$DOs with radial symbols, and obtain explicit formulas for their eigenvalues and eigenfunctions. Some of these results are known, others, to our best knowledge, are new and could be of independent interest.  As a corollary, we construct a family of symbols $\cV : \re^4 \to \re$, possessing a suitable symmetry such that the operator $H_\cV$ has explicit eigenvalues and orthonormal basis of eigenfunctions.
    In Section \ref{s4} we consider the eigenvalue distribution near the Landau level $\Lambda_q$, $q \in \Z_+$, for the operator $H_\cV$. First, in Proposition \ref{np51} we construct our explicit example of a symbol $\cV$ in the Schwartz class $\cS(\re^4)$ such that $\opwv \geq 0$, which shows that the asymptotic behavior of the discrete spectrum of the operator $H_{-\cV}$ near a given $\Lambda_q$ could be arbitrarily fast in contrast to the case of a local potential $V$. Next, we examine the eigenvalue asymptotics for the operators $H_{\pm \cV}$ with $\opwv \geq 0$, assuming that the operator $\opw(v_{b,q})$ admits an anti-Wick symbol $\tilde{v}_{b,q}$ related to $\zeta_{b,q,r}$ through \eqref{fin20} and \eqref{d16}, and that $\zeta_{b,q,r}$ decays exponentially at infinity or has a compact support.
    Finally, Theorem \ref{op1} contains our result on the eigenvalue asymptotics for $H_\cV$ near the Landau level $\Lambda_q$, $q \in \Z_+$, in the case where $\opwv$ is not supposed to have a definite sign but $v_{b,q}$ has a power-like decay at infinity.

 \section{Weyl and anti-Wick $\Psi$DOs}
\label{s2}
     In this section we recall briefly some basic facts from the theory of $\Psi$DOs with Weyl and anti-Wick symbols, assuming that the dimension $n \geq 1$.
      We will use the following notations. Let $X$ be a separable Hilbert space with scalar product $\langle \cdot, \cdot\rangle_X$, linear with respect to the first factor, and norm $\|\cdot\|_X$. By $\gB(X)$ (resp., by $\gS_\infty(X))$ we will denote the space of linear bounded (resp., compact) operators in $X$, and by $\gS_p(X)$, $p \in [1,\infty)$, the $p$th Schatten-von Neumann space of operators $T \in \gS_\infty(X))$ for which the norm
      $\|T\|_p : = \left({\rm Tr}\,(T^* T)^{p/2}\right)^{1/p}$
      is finite. In particular, $\gS_1(X)$ is the trace class, and $\gS_2(X)$ is the Hilbert-Schmidt class\\
      Let $\cS(\re^n)$ be the Schwartz class over $\re^n$, and $\cS'(\re^n)$ be its dual class. For $\cF \in \cS(\re^{2n})$ we define the $\Psi$DO $\opw(\cF)$  with Weyl symbol $\cF$ as the operator with integral kernel
      \bel{1000}
      K(\bx, \bx') = (2\pi)^{-n}\,\int_{\re^{n}}\cF\left(\frac{{\bf x} + {\bf x}'}{2}, \bxi\right)
\E^{\I({\bf x}-{\bf x}') \cdot \bxi}\, \D\bxi, \quad \bx, \bx' \in \re^n.
    \ee
Let $u,v \in \cS(\re^n)$. Define {\em the Wigner transform} $W(u,v)$ of the pair $(u,v)$ by
    $$
    (W(u,v))(\bx,\bxi) : = (2\pi)^{-n} \int_{\re^n} \E^{\I\bx'\cdot\bxi}u(\bx-\bx'/2) \, \overline{v(\bx+\bx'/2)}\,\D\bx', \quad (\bx, \bxi) \in \re^{2n}.
    $$
    Then $W(u,v) \in \cS(\re^{2n})$ and we have $W(v,u) = \overline{W(u,v)}$. Moreover, the Wigner transform extends to $u,v \in L^2(\re^n)$ in which case
    $$
    \|W(u,v)\|_{L^2(\re^{2n})}^2 = (2\pi)^{-n} \, \|u\|_{L^2(\re^n)}^2 \, \|v\|_{L^2(\re^n)}^2.
    $$
    By \cite[Eq. (23.39)]{Sh01}, the function $(2\pi)^n W(u,v)$ coincides with the Weyl symbol of the operator with integral kernel $u(\bx) \overline{v(\bx')}$, $\bx, \bx' \in \re^n$.
    Note that if $\cF \in \cS(\re^{2n})$ and  $u,v \in \cS(\re^n)$, then
\bel{j10}
\langle \opw(\cF) u, v \rangle_{L^2(\re^n)}  =  \langle \cF , W(v,u) \rangle_{L^2(\re^{2n})}.
\ee
    Therefore, if
$\cF \in \cS'(\re^{2n})$, then \eqref{j10} defines a linear continuous mapping $\opw(\cF) : \cS(\re^n) \to \cS'(\re^n)$.\\
Let us now introduce the Fourier transform
$$
(\Phi u)(\bxi) = \hat{u}(\bxi) : = (2\pi)^{-N/2} \int_{\re^N} e^{-i \bxi \cdot \bx} f(\bx)\,d\bx, \quad \bxi \in \re^N,
$$
for $u \in \cS(\re^N)$, $N \geq 1$, and then extend it to $\cS'(\re^N)$. In particular, $\Phi$ extends to a unitary operator in $L^2(\re^N)$. \\
If $\cF \in \cS(\re^{2n})$, then the integral kernel of the operator $\opw(\cF)$ can be written not only as in \eqref{1000} but also as
    \bel{1001}
K(\bx, \bx') = (2\pi)^{-n}\,\int_{\re^{n}}\widehat{\cF}\left(\bxi, {\bf x}' - {\bf x}\right)
\E^{\I(\frac{{\bf x} + {\bf x}'}{2}) \cdot \bxi}\, \D\bxi, \quad \bx, \bx' \in \re^n.
    \ee

    Let ${\Gamma_{\rm w}(\re^{2n})}$, $n\geq 1$,
denote the set of functions $\cF: \re^{2n} \to {\mathbb C}$ such that
    $$
    \|\cF\|_{\Gamma_{\rm w}(\re^{2n})} : = \sup_{\{\alpha , \beta \in {\mathbb
Z}_+^n \; | \; |\alpha|, |\beta| \leq  [\frac{n}{2}] + 1\}}
\sup_{(\bx,\bxi) \in \re^{2n}} |D_\bx^{\alpha}
D_{\bxi}^{\beta} \cF(\bx,\bxi)| < \infty.
    $$
    Note that $\Gamma_{\rm w}(\re^{2n}) \subset \cS'(\re^{2n})$.

 \beprl{Dprvbp1}
{\rm  (\cite{CaVa72}, \cite{Co75}, \cite[Corollary 2.5 (i)]{Bo99})} Let $\cF \in \gwrdn$. Then $\opw(\cF)$ extends to an operator bounded in $L^2(\re^n)$. Moreover, there exists a constant $c_0$ independent of $\cF$, such that
$$
     \|\opw(\cF)\| \leq c_0\|\cF\|_{\Gamma_{\rm w}(\re^{2n})}.
    $$
    \epr
    {\em Remark}: We will consider Weyl $\Psi$DOs $\opw(\cF)$ acting in $L^2(\re^n)$, under the generic assumption $\cF \in \gwrdn$;  then, by Proposition \ref{Dprvbp1}, we have $\opw(\cF) \in \gB(L^2(\re^n))$. However, many assertions in the sequel remain valid under more general assumptions about  $\cF$.\\

 Further, for $m \in \re$ and $\varrho \in (0, 1]$, introduce the H\"ormander--Shubin class
$$
\cS_\varrho^{m}(\re^N) : = \left\{u \in C^\infty(\re^N) \, | \, \sup_{\bx \in \re^N} \langle \bx \rangle^{-m + \varrho |\alpha|}|D^\alpha\,u(\bx)| < \infty , \; \alpha \in \Z_+^N\right\}. $$
\beprl{Dprvbp1b} {\rm \cite[Problem 24.9]{Sh01}}
Let $\cF \in \cS_\varrho^{0}(\re^{2n})$ with $\varrho \in (0, 1]$. Assume that
    $$
   \lim_{|{\bf w}| \to \infty} {\mathcal F}({\bf w}) = 0,
   $$
uniformly with respect to $\frac{{\bf w}}{|{\bf w}|} \in {\mathbb S}^{2n-1}$. Then $\opw(\cF) \in \gS_\infty(L^2(\re^n))$.
\epr
The Parseval theorem and \eqref{1000} imply the following
\beprl{Dprvbp1c}
Let $\cF \in L^2(\re^{2n})$. Then $\opw(\cF)$ extends to a Hilbert-Schmidt operator in $L^2(\re^n)$, and
    \bel{d10}
     \|\opw(\cF)\|_2^2 = (2\pi)^{-n} \|\cF\|^2_{L^2(\re^{2n})}.
    \ee
    \epr
    Next, we describe {\em the  metaplectic unitary equivalence}
of Weyl \pdos   whose symbols are mapped into each other by a linear
symplectic transformation.
    \beprl{Dp42}
{\em \cite[Chapter 7, Theorem A.2]{DiSj99}}
Let $\kappa: \re^{2n} \rightarrow \re^{2n}$, $n\geq 1$,  be a
linear symplectic transformation. Assume that $\cF \in \gwrdn$. Then there exists a unitary operator $\gM_\kappa:
L^2(\re^n) \rightarrow L^2(\re^n)$ such that
    \bel{D31}
\opw(\cF \circ \kappa) = \gM_\kappa^* \opw(\cF) \gM_\kappa.
    \ee
    \epr
{\em Remarks}: (i) Proposition \ref{Dp42} remains valid for a considerably wider class of symbols including the linear and the quadratic ones.\\
(ii) The operator $\gM_\kappa$ is called {\em the metaplectic operator} generated by the linear symplectomorphism $\kappa$.\\

Further, we discuss {\em the anti-Wick} $\Psi$DOs. Let at first $\cF \in \gwrdn$. Set
    \bel{ja6}
\cG_n({\bf w}) : = \pi^{-n} e^{-|{\bf w}|^2}, \quad  {\bf w} \in \re^{2n},
    \ee
and define the $\Psi$DO
    \bel{aa1}
\opaw(\cF)  : =  \opw(\cF * \cG_n).
    \ee
Then we will say that $\opaw(\cF)$ is a $\Psi$DO with anti-Wick symbol $\cF$.  If $\cF \in \cS(\re^{2n})$ and $u, v \in \cS(\re^{n})$, then, similarly to \eqref{j10}, we have
\bel{d13}
\langle \opaw(\cF) u, v \rangle_{L^2(\re^n)}  =  \langle \cF , \cG_n * W(v,u) \rangle_{L^2(\re^{2n})}
\ee
where
$ \cG_n * W(u,v) \in \cS(\re^{2n})$
is the {\em Husimi transform} of $(u,v)$.
    Therefore, if
$\cF \in \cS'(\re^{2n})$, then \eqref{d13} defines, similarly to \eqref{j10}, a linear continuous mapping $\opaw(\cF) : \cS(\re^n) \to \cS'(\re^n)$.\\
Since the convolution with the Gaussian function $\cG_n$ may improve the regularity and the decay rate of the symbol $\cF$, the definition of the anti-Wick $\Psi$DOs can be  extended to a  class of symbols, considerably larger than $\gwrdn$. In particular, we have the following

\beprl{np3} {\rm \cite[Lemma 2.5]{PuRaVBl13}}
{\rm (i)} Let $\cF \in L^\infty(\re^{2n})$. Then $\opaw(\cF) \in {\gB}(L^2(\re^n))$, and
$$
\|\opaw(\cF)\| \leq \|\cF\|_{L^\infty(\re^{2n})}.
$$
{\rm (ii)} Let $\cF \in L^p(\re^{2n})$, $p \in [1,\infty)$. Then $\opaw(\cF) \in {\gS}_p(L^2(\re^n))$, and
$$
\|\opaw(\cF)\|^p_p \leq (2\pi)^{-n} \|\cF\|^p_{L^p(\re^{2n})}.
$$
\epr
Set
$$
\gawrdn : = L^1(\re^{2n}) + L^\infty(\re^{2n}).$$
Note that if $\cF \in \gawrdn$, then $\cF * \cG_n \in \gwrdn$. Our generic assumption concerning anti-Wick $\Psi$DOs $\opaw(\cF)$ will be $\cF \in \gawrdn$.
As in the case of Weyl $\Psi$DOs, many assertions in the sequel hold true under wider assumptions.\\
Let us give an alternative definition of the anti-Wick $\Psi$DO $\opaw(\cF)$ with $\cF \in \gawrdn$. For $({\bf x}, \bxi) \in \re^{2n}$ set
$$
\phi_{{\bf x}, \bxi}({\bf y}) : = \pi^{-n/4} e^{i\bxi \cdot {\bf y}} e^{-\frac{|{\bf x} - {\bf y}|^2}{2}}, \quad {\bf y} \in \re^n,
$$
and introduce the rank-one orthogonal projection
$$
P_{{\bf x}, \bxi} : = \langle \cdot, \phi_{{\bf x}, \bxi}\rangle_{L^2(\re^n)} \, \phi_{{\bf x}, \bxi}.
$$
Then we have
    \bel{n4}
\opaw(\cF) = (2\pi)^{-n} \int_{\re^{2n}} \cF({\bf x}, \bxi) P_{{\bf x}, \bxi} \, d{\bf x}\,d\bxi,
    \ee
    where the integral is understood in the weak sense. Identity \eqref{n4} implies the monotonicity of $\opaw(\cF)$ with respect to the symbol $\cF$. Namely, we have the following important
    \beprl{np5} Assume that $\cF \in \gawrdn$, and $\cF(\bx, \bxi) \geq 0$ for almost every $(\bx, \bxi) \in \re^{2n}$. Then
    $\opaw(\cF) \geq 0.$
     \epr

     {\em Remark}: Not every Weyl $\Psi$DO $\opw(\cF)$ admits an anti-Wick symbol $\widetilde{\cF}\in \cS'(\re^{2n})$. If $\cF$ is a given Weyl symbol, then in order to find the corresponding anti-Wick symbol $\widetilde{\cF}$ we have to solve the equation
     \bel{n81}
     \cF = \widetilde{\cF} * \cG_n,
     \ee
     i.e.  to invert the so called Weierstrass transform, or, which is equivalent, to solve the inverse heat equation (see \cite[Remark 24.2]{Sh01}). For example, if $\cF \in C_0^\infty(\re^{2n})$ and $\cF \neq 0$, then there exists no $\widetilde{\cF} \in \cS'(\re^{2n})$ such that \eqref{n81} holds true. On the other hand,
if the Fourier transform $\widehat{\cF}$ of $\cF$ is in $C_0^\infty(\re^{2n})$, then $\opw(\cF)$ admits an anti-Wick symbol $\widetilde{\cF} \in \cS(\re^{2n})$ given by
    $$
    \widetilde{\cF}(\bw) = (2\pi)^{-n} \int_{\re^{2n}} e^{i {\bf u}\cdot\bw}e^{|{\bf u}|^2/4} \widehat{\cF}({\bf u})\,d{\bf u}, \quad \bw \in \re^{2n}.
    $$

\section{Unitary equivalences of the operators $H_\cV$}
\label{s3}
In this section we establish  unitary equivalences for the Landau Hamiltonian $H_0$ and its perturbation $\opwv$.
First, we describe a suitable spectral representation of $H_0$. \\
Let $\varphi(x,y) : = \frac{b(x^2 + y^2)}{4}$, so that $\Delta \varphi = b$.
Introduce the magnetic creation operator
\bel{D132c}
a^* = -2i e^{\varphi} \frac{\partial}{\partial z} e^{-\varphi}, \quad z = x+iy,
\ee
and the magnetic annihilation operator
\bel{D132a}
a = -2i \E^{-\varphi} \frac{\partial}{\partial \bar{z}} \E^{\varphi}, \quad \bar{z} = x-iy.
\ee
The operators $a$ and $a^*$ are closed on $\gD(a) = \gD(a^*) = \gD(H_0^{1/2})$, and are mutually adjoint in $L^2(\rd)$. Moreover,
	\bel{Da31}
[a, a^*] = 2b \; I,
	\ee
 and
\bel{D132b}
\Homu = a^* a + b  I = a a^* - b I.
\ee
 Therefore,
 \bel{Dj1}
    {\rm Ker}\,(\Homu - \Lambda_q I) = (a^*)^q\, {\rm Ker}\,a, \quad q \in \Z_+,
    \ee
    and, by \eqref{D132a}, we have
    $$
    {\rm Ker}\;a = \left\{u \in L^2(\rd) \, | \, u = \E^{-\varphi}\,g, \quad \frac{\partial g}{\partial \bar{z}} = 0\right\}.
    $$
    Up to the unitary mapping $g \mapsto e^{-\varphi} g$, ${\rm Ker}\;a$ coincides with {\em Fock-Segal-Bargmann space} of holomorphic functions (see e.g. \cite[Section 3.2]{Ha00}).\\
Next, we recall that $\Homu$ is unitarily equivalent under an appropriate metaplectic mapping to the operator $(b \gh) \otimes I_y$, where
	\bel{o18}
\gh:= - \frac{d^2}{dx^2} + x^2
	\ee
is the harmonic oscillator, self-adjoint in $L^2(\re_x)$, and essentially self-adjoint on $C_0^\infty(\re)$, while $I_y$ is the identity in $L^2(\re_y)$. Let us recall the spectral properties of $\gh$.
We have
$$
\gh= \alpha^* \alpha + I = \alpha \alpha^* - I,
$$
where
$$
\alpha:= -i\frac{d}{dx} - ix, \quad \alpha^*: = -i\frac{d}{dx} + ix,
$$
are the standard annihilation and creation operators which are closed on $\gD(\alpha) = \gD(\alpha^*) = \gD(\gh^{1/2})$, and are mutually adjoint in $L^2(\re)$.
Moreover, they satisfy the commutation relation
$$
[\alpha, \alpha^*] = 2 I.
$$
 Therefore,
$$
\sigma(\gh) = \bigcup_{q \in \Z_+} \{2q+1\},
$$
$$
{\rm Ker}\, (\gh-(2q+1)I) = (\alpha^*)^q {\rm Ker}\,\alpha, \quad q \in \Z_+ : = \{0,1,2,\ldots\}.
$$
Since
$$
{\rm Ker}\,\alpha = \left\{u \in L^2(\re)\,|\,u(x) = c\E^{-x^2/2}, x \in \re, \quad c \in \C\right\},
$$
we have
$$
{\rm dim \, Ker}\, (h-(2q+1)I) = 1, \quad q \in \Z_+.
$$
Denote by $\pi_q$ the orthogonal projection onto ${\rm Ker}\,(\gh - (2q+1)I)$, $q \in \Z_+$.
Set
$$
\tilde{\psi}_q(x): = \left(-\frac{d}{dx}+x\right)^q \E^{-x^2/2} = (-i)^q \, (\alpha^*)^q \, \E^{-x^2/2}, \quad x \in \re, \quad q \in \Z_+.
$$
Then the functions $\tilde{\psi}_q$ satisfy $$h \tilde{\psi}_q = (2q+1) \tilde{\psi}_q, \quad q \in \Z_+,$$ and form an orthogonal eigenbasis in $L^2(\re)$.
A simple calculation shows that
 \bel{Dradi}
\psi_q: = \tilde{\psi}_q/\|\tilde{\psi}_q\| =
\frac{{\rm H}_{q}(x) \E^{-x^2/2}}{(\sqrt{\pi}2^{q} q!)^{1/2}}, \quad
x \in \re,
\ee
where
$$
{\rm H}_q(x): = e^{x^2/2} \left(-\frac{d}{dx}+x\right)^q e^{-x^2/2} = (-1)^q \E^{x^2} \frac{d^q}{dx^q} \E^{-x^2}, \quad
x \in \re,
$$
  is the Hermite
polynomial of degree $q$ (see e.g. \cite{GrRy65}). Thus, the functions $\psi_q$, $q \in \Z_+$, form an orthonormal basis in $L^2(\re)$.
 Introduce the Wigner functions
    \bel{j15}
    \Psi_{j,k} : = W(\psi_j, \psi_k), \quad j,k \in \Z_+.
    \ee
    If $j=k$, we will write
    \bel{j15a}
    \Psi_k = \Psi_{k,k}, \quad k \in \Z_+.
    \ee
Lemma \ref{jp3} below contains explicit expressions for $\Psi_{k,\ell}$, $k, \ell \in \Z_+$. In order to formulate it, we introduce
the (generalized) Laguerre polynomials
\bel{ja5}
    {\rm L}_q^{(\nu)}(\xi): = \frac{\xi^{-\nu} \E^{\xi}}{q!} \frac{d^q}{d\xi^q}
\left(\xi^{q+\nu}  \E^{-\xi}\right), \quad \xi>0, \quad \nu \in \re, \quad q \in \Z_+.
\ee
As usual, we will write
\bel{Dj3}
   {\rm L}_q(\xi) : = {\rm L}^{(0)}_q(\xi) = \frac{e^{\xi}}{q!} \frac{d^q}{d\xi^q}
\left(\xi^q e^{-\xi}\right) = \sum_{\ell = 0}^q {q \choose \ell} \frac{(-\xi)^\ell}{\ell!}, \quad \xi \in \re.
    \ee
     \belel{jp3}
     Let $k,\ell \in \Z_+$. Then for $(x,\xi) \in \rd$ we have
     $$
     \Psi_{k,\ell}(x,\xi) =
     $$
     \bel{m4}
     \left\{
     \begin{array} {l}
     \frac{1}{\pi} (-1)^\ell 2^{\frac{k-\ell}{2}} \left(\frac{\ell !}{k !}\right)^{1/2} (x + \I \xi)^{k-\ell} {\rm L}_\ell^{(k-\ell)}(2(x^2 + \xi^2))\E^{-(x^2+\xi^2)}, \quad k \geq \ell,\\ [3mm]
     \frac{1}{\pi} (-1)^k 2^{\frac{\ell-k}{2}} \left(\frac{k !}{\ell !}\right)^{1/2} (x - \I \xi)^{\ell-k} {\rm L}_k^{(\ell-k)}(2(x^2 + \xi^2))\E^{-(x^2+\xi^2)}, \quad k \leq \ell,
     \end{array}
     \right.
     \ee
     In particular,
     \bel{n55}
     \Psi_{k,\ell}(r\cos{\theta},r\sin{\theta}) = \E^{\I(k-\ell)\theta} \Phi_{k,\ell}(r), \quad k,\ell \in \Z_+, \quad \theta \in [0,2\pi), \quad r \in [0,\infty),
     \ee
     where $\Phi_{k,\ell}(r)$ is a symmetric real valued matrix. Moreover,
     \bel{n54}
     \Psi_{k}(x,\xi) = \frac{1}{\pi} (-1)^k   {\rm L}_k(2(x^2 + \xi^2))e^{-(x^2+\xi^2)}, \quad k \in \Z_+, \quad (x,\xi) \in \rd.
     \ee
     \ele
     \begin{proof}
     An elementary calculation taking into account the parity of the Hermite polynomials easily yields
     \bel{m1}
     \Psi_{k,\ell}(x,\xi) = \frac{(-1)^k}{(2\pi) \sqrt{\pi} (k! \ell !)^{1/2}2^{\frac{k+\ell}{2}}} \E^{-(x^2+\xi^2)} \, \int_\re \E^{-\left(\frac{y}{2}+\I\xi\right)^2} {\rm H}_k\left(\frac{y}{2}-x\right) {\rm H}_\ell\left(\frac{y}{2}+x\right) \D y.
     \ee
     Changing the variable $\frac{y}{2} + \I \xi = t$, and applying a standard complex-analysis argument showing that we can replace the interval of integration $\re + \I \xi$ by $\re$, we find that
     \bel{m2}
     \int_\re \E^{-\left(\frac{y}{2}+\I\xi\right)^2} {\rm H}_k\left(\frac{y}{2}+x\right) {\rm H}_\ell\left(\frac{y}{2}-x\right) \D y =
     2 \int_\re \E^{-t^2} {\rm H}_k\left(t-x-\I \xi\right) {\rm H}_\ell\left(t+x-\I\xi\right) \D t.
     \ee
     By \cite[Eq. (7.377)]{GrRy65},
     $$
     \int_\re \E^{-t^2} {\rm H}_k\left(t-x-\I \xi\right) {\rm H}_\ell\left(t+x-\I\xi\right) \D t =
     $$
     \bel{m3}
     \left\{
     \begin{array} {l}
     2^k \sqrt{\pi} \ell ! (-x - \I \xi)^{k-\ell} {\rm L}_\ell^{(k-\ell)}(2(x^2 + \xi^2)), \quad k \geq \ell,\\ [3mm]
     2^\ell  \sqrt{\pi} k! (x - \I \xi)^{\ell-k} {\rm L}_k^{(\ell-k)}(2(x^2 + \xi^2)), \quad k \leq \ell.
     \end{array}
     \right.
     \ee
     Putting together \eqref{m1}, \eqref{m2}, and \eqref{m3}, we obtain \eqref{m4}.
     \end{proof} {\em Remark}: By \eqref{n54} with $q=0$,  we have
     \bel{ja1}
     \Psi_0(x,\xi) = \frac{1}{\pi}  e^{-(x^2+\xi^2)} = \cG_1(x,\xi), \quad (x,\xi) \in \rd,
     \ee
     where $\cG_1$ is the Gaussian defined in \eqref{ja6}.\\

For ${\bf x} = (x,y) \in \rd, \;  \bxi = (\xi,\eta)
\in \rd$, set
    \bel{Dsof22}
     {\kappa}_{b}({\bf x}, \bxi): =
\left(\frac{1}{\sqrt{b}} (x-\eta), \frac{1}{\sqrt{b}} (\xi-y),
\frac{\sqrt{b}}{2}(\xi+y), -\frac{\sqrt{b}}{2}(\eta+x)\right).
    \ee
Evidently, the mapping ${\kappa}_b$ is linear and
symplectic. Introduce the Weyl symbol
\bel{Dgr1}
{\mathcal H}_0(\bx,\bxi) = \left(\xi+\frac{1}{2} b y\right)^2+\left(\eta -\frac{1}{2}
bx\right)^2,  \quad \bx  = (x,y) \in
\rd, \quad \bxi  = (\xi,\eta) \in \rd,
    \ee
    of the operator $\Homu$ defined  in \eqref{D132}. Then we have
    \bel{Da12a}
    ({\mathcal H}_0 \circ \kappa_b)(\bx,\bxi) = b(x^2 + \xi^2), \quad (\bx,\bxi) \in T^*\re^d.
    \ee
    Note that the function on the r.h.s. of \eqref{Da12a} coincides with the Weyl symbol of the operator $(b \gh)\otimes I_y$.
Next, define the unitary operator
$\cU_b: L^2(\rd_{x,y}) \to L^2(\rd_{x,y})$ by
\bel{D133}
(\cU_b u)(x,y): = \frac{\sqrt{b}}{2\pi}
\int_{\rd} \E^{\I\phi_b(x,y;x',y')} u(x',y') dx'dy'
\ee
where
$$
\phi_b(x,y;x',y'): = b \frac{xy}{2} +
b^{1/2}(xy' - yx') - x'y'.
$$
 Writing ${\kappa}_{b}$ as a product of elementary linear
symplectic transformations (see e.g. \cite[Lemma 18.5.8]{HoIII94}), and composing the corresponding elementary metaplectic operators, we
easily  check that
 ${\mathcal U}_b$ is a metaplectic operator generated by
 the symplectic mapping ${\kappa}_b$ in \eqref{Dsof22}.

 \beprl{jp1}
 We have
 \bel{D135}
\cU_b^* \Homu \cU_b = (b \gh ) \otimes I_y,
\ee
    \bel{D134}
\cU_b^* a  \cU_{b} = (\sqrt{b} \alpha) \otimes I_y, \quad \cU_b^* a^*  \cU_{b} = (\sqrt{b} \alpha^*) \otimes I_y.
    \ee
Moreover, if $\cV \in \gwrf$, then
\bel{j4}
\cU_b^* \opw(\cV) \cU_b = \opw(\cV_b)
\ee
where
    \bel{j18}
    \cV_b : = \cV \circ \kappa_b.
    \ee
    \epr
    \begin{proof}
    Relation \eqref{D135} (resp., \eqref{j4}) follows from Proposition \ref{Dp42} and the remark after it, and \eqref{Da12a} (resp., \eqref{j18}).
    Similarly, relations \eqref{D134} follow from Proposition \ref{Dp42} and the fact that the Weyl symbol of the operator $a$ (resp., $a^*$) is mapped under the symplectic transformation $\kappa_b$ into the symbol of the operator $(\sqrt{b} \alpha) \otimes I_y$ (resp., $(\sqrt{b} \alpha^*) \otimes I_y$).
    \end{proof}

    \beprl{np90}
    Assume that  $\cV \in \cS^0_\varrho(\re^4)$ with $\varrho \in (0,1]$,  and
    \bel{n100}
    \lim_{x^2+y^2+\xi^2+\eta^2 \to \infty} \cV_b(x,y,\xi,\eta)\,(x^2 + \xi^2)^{-1} = 0,
    \ee
   uniformly with respect to the variables on ${\mathbb S}^3$.  Then, $\opwv$ is bounded, and $\opwv H_0^{-1}$ is compact in $L^2(\rd)$.
    \epr
    \begin{proof}
    Since $\cS^0_\varrho(\re^4) \subset \gwrf$, the boundedness of $\opwv$ follows from Proposition \ref{Dprvbp1}. By Proposition \ref{jp1}, we have
    \bel{n101}
    \cU_b^* \opw(\cV) H_0^{-1} \cU_b = \opw(\cV_b) ((b\gh)^{-1} \otimes I_y).
    \ee
    By the pseudo-differential calculus, we easily find that the Weyl symbol of the operator $\opw(\cV_b) ((b\gh)^{-1} \otimes I_y)$ is in the class $\cS^0_\varrho(\re^4)$, while \eqref{n100} guarantees that this symbol decays at infinity. Then Proposition \ref{Dprvbp1b} implies
    that $\opw(\cV_b) ((b\gh)^{-1} \otimes I_y) \in \gS_\infty(L^2(\rd))$, and by \eqref{n101} we find that the operator $\opwv H_0^{-1}$ is compact as well.
    \end{proof}

    Our next goal is to establish the unitary equivalence between $\opwv$ and an operator $\cM : \ell^2(\Z_+^2) \to \ell^2(\Z_+^2)$. Similarly, we will establish the unitary equivalence between the Toeplitz operator $\tqwv$ with fixed $q \in \Z_+$, defined in \eqref{n80a}, and an operator $\cM_q : \ell^2(\Z_+) \to \ell^2(\Z_+)$.  To this end, we need the canonical basis $\left\{\varphi_{k,q}\right\}_{k \in \Z_+}$ of ${\rm Ran}\, p_q$, $q \in \Z_+$.
    Let at first $q=0$. Then the functions
    $$
    \widetilde{\varphi}_{k,0}({\bf x}) = z^k \E^{-b|{\bf x}|^2/4}, \quad {\bf x} = (x,y) \in \rd, \quad z = x+iy, \quad k \in \Z_+,
    $$
    form a natural orthogonal basis of ${\rm Ker}\;a = {\rm Ran}\,p_0$ (see e.g. \cite[Sections 3.1 -- 3.2]{Ha00}). Normalizing, we obtain the following orthonormal basis of ${\rm Ran}\,p_0$:
    \bel{j25}
    {\varphi}_{k,0}({\bf x}) : = \frac{ \widetilde{\varphi}_{k,0}({\bf x})}{\| \widetilde{\varphi}_{k,0}\|_{L^2(\rd)}} =
    \sqrt{\frac{b}{2\pi}} \sqrt{\frac{1}{k!}} \left(\sqrt{\frac{b}{2}} \, z\right)^k\, \E^{-b|{\bf x}|^2/4}, \quad \bx \in \rd, \quad k \in \Z_+.
    \ee
    Let now $q \geq 1$. Set
    \bel{j26}
   \widetilde{{\varphi}}_{k,q} = (a^*)^q \, \varphi_{k,0}, \quad k \in \Z_+.
    \ee
    The commutation relation \eqref{Da31} easily implies
    $$
    \langle \widetilde{{\varphi}}_{k,q}, \widetilde{{\varphi}}_{\ell,q}\rangle_{L^2(\rd)} = (2b)^q q! \delta_{k\ell}, \quad k,\ell \in \Z_+.
    $$
    Therefore, the functions
    \bel{j27}
    {\varphi}_{k,q} : = \frac{ \widetilde{\varphi}_{k,q}}{\| \widetilde{\varphi}_{k,q}\|_{L^2(\rd)}} = \frac{ \widetilde{\varphi}_{k,q}}{\sqrt{(2b)^q q!}}, \quad k \in \Z_+,
    \ee
    form an orthonormal basis of ${\rm Ran}\,p_q$, $q \in \N$. \\

   {\em Remark}: The functions $ {\varphi}_{k,q}$ admit an explicit expression,
    namely
    $$
    \varphi_{k,q}({\bf x}) =
    $$
    $$
    \frac{1}{i^q} \sqrt{\frac{b}{2\pi}} \sqrt{\frac{q!}{k!}} \left(\sqrt{\frac{b}{2}} \, z\right)^{k-q}\,{\rm L}_q^{(k-q)}\left(\frac{b |\bx|^2}{2}\right) \E^{-b|{\bf x}|^2/4}, \quad {\bf x} \in \rd, \quad k, q \in \Z_+,
    $$
the Laguerre polynomials being defined in \eqref{ja5}.\\

    Let $\cV \in \gwrf$. Set
    \bel{j23}
    m_{k,\ell;q,r}(\cV) : = \langle \opw(\cV) \varphi_{\ell,r} , \varphi_{k,q}\rangle_{L^2(\rd)}, \quad m_{k,\ell;q}(\cV) : = m_{k,\ell;q,q}(\cV), \quad k,\ell,q,r \in \Z_+.
    \ee

The facts that $\left\{\varphi_{k,q}\right\}_{(k, q) \in \Z_+^2}$ is an orthonormal basis in $L^2(\rd)$, while $\left\{\varphi_{k,q}\right\}_{k \in \Z_+}$ is an orthonormal basis of ${\rm Ran}\,p_q$ with fixed $q \in \Z_+$, imply immediately the following elementary
 \beprl{np1}
 Let $\cV \in \gwrf$. \\
 {\rm (i)} The operator $\opwv$ is unitarily equivalent to  $\cM: \ell^2(\Z_+^2) \to \ell^2(\Z_+^2)$ defined by
    \bel{j22a}
    \left(\cM{\bf c}\right)_{k,q} : = \sum_{(\ell, r) \in \Z_+^2} m_{k,\ell;q,r}(\cV)\, c_{\ell,r}, \quad (k,q) \in \Z_+^2, \quad {\bf c} = \left\{c_{\ell,r}\right\}_{(\ell,r) \in \Z_+^2} \in \ell^2(\Z_+^2).\\
    \ee
 {\rm (ii)} Fix $q \in \Z_+$. Then the operator $\tqwv$ is unitarily equivalent to  $\cM_q: \ell^2(\Z_+) \to \ell^2(\Z_+)$ defined by
    \bel{j22}
    \left(\cM_q {\bf c}\right)_k : = \sum_{\ell \in \Z_+} m_{k,\ell;q}(\cV)\, c_\ell, \quad k \in \Z_+, \quad {\bf c} = \left\{c_\ell\right\}_{\ell \in \Z_+} \in \ell^2(\Z_+).
    \ee
   \epr
    We would like to give a more explicit form  of the matrices defining the operators $\cM$ and $\cM_q$, $q \in \Z_+$. To this end we need the following important
    \belel{jl1}
    We have
    \bel{j24}
    \cU_b^*\,\varphi_{k,q} = i^{q-k} \psi_q \otimes \psi_k, \quad k,q \in \Z_+,
    \ee
    where $\psi_q$, $q \in \Z_+$, are the Hermite functions  defined in \eqref{Dradi}.
    \ele
    \begin{proof}
    By \eqref{j25} -- \eqref{j27}, and \eqref{D134}, we get
    \bel{j28}
    \cU_b^*\,\varphi_{k,q} = \sqrt{\frac{b^{k+1}}{\pi 2^{k + q +1} k! q!}} ((\alpha^*)^q \otimes I_y)\,\cU_b^* \, u_k
    \ee
    where
    $$
    u_k(x,y) = (x+\I y)^k \E^{-b(x^2 + y^2)/4}, \quad (x,y) \in \rd.
    $$
    Using \eqref{D133}, we easily find that
    \bel{j30}
    \left(\cU_b^* u_k\right)(x,y) = \frac{1}{2\pi\sqrt{b}} \left(\frac{2}{\sqrt{b}}\right)^k \E^{\I xy} \left(\frac{\partial}{\partial \overline{z}}\right)^k \, J(x,y)
    \ee
    where
    $$
    J(x,y) : = \int_{\rd} \E^{-\I(ty- sx)} \, \E^{-\I ts/2} \, \E^{-(t^2+s^2)/4} \, \D t\,\D s, \quad (x,y) \in \rd.
    $$
    An elementary calculation yields
    \bel{j31}
     J(x,y)  = \sqrt{2} (2\pi) \E^{-\I xy} \, \E^{-(x^2+y^2)/2}.
     \ee
     Inserting \eqref{j31} into \eqref{j30}, we get
     \bel{j32}
     \left(\cU_b^* u_k\right)(x,y) = \sqrt\frac{2}{b^{k+1}}  \, \E^{-x^2/2} (-1)^k (\alpha^*)^k \E^{-y^2/2}.
     \ee
     Inserting \eqref{j32} into \eqref{j28}, we obtain \eqref{j24}.
     \end{proof}
     {\em Remark}: By \eqref{j24}, we have
     \bel{ja15}
     \cU_{b} f = \sum_{(k,q)  \in \Z_+^2} i^{k-q} \langle f, \psi_q \otimes \psi_k \rangle_{L^2(\rd)}\, \varphi_{k,q}, \quad f \in L^2(\rd).
    \ee
     Let $\cV \in \gwrf$. Set
    \bel{j19}
    v_{b,q}(y,\eta) : = \int_{\rd} \cV_b(x,y,\xi,\eta) \, \Psi_q(x,\xi) \, \D x\,\D \xi, \quad (y,\eta) \in \rd,
    \ee
   where  $\cV_b$ is the symbol defined in \eqref{j18}, and $\Psi_q$  is the Wigner function defined in \eqref{j15a}.

 \beprl{np2}  Let $\cV \in \gwrf$. Then we have
    $$
    m_{k,\ell;q,r}(\cV) = i^{k-\ell-q+r} \langle \cV_b, \Psi_{q,r} \otimes \Psi_{k,\ell} \rangle_{L^2(\re^4)}
    $$
    where
    $$
    \left(\Psi_{q,r} \otimes \Psi_{k,\ell}\right) (x,y,\xi,\eta) : = \Psi_{q,r}(x,\xi)\,\Psi_{k,\ell}(y,\eta), \quad k,\ell,q,r \in \Z_+, \quad (x,y,\xi, \eta) \in \re^4.
    $$
    In particular,
    \bel{n1}
     m_{k,\ell;q}(\cV) = i^{k-\ell} \langle \cV_b, \Psi_{q} \otimes \Psi_{k,\ell} \rangle_{L^2(\re^4)} = i^{k-\ell} \langle v_{b,q},  \Psi_{k,\ell} \rangle_{L^2(\re^2)}.
     \ee
     \epr
    \begin{proof} By  \eqref{j23}, \eqref{j4},  and \eqref{j24}, we have
    \begin{align}
    m_{k,\ell;q,r}(\cV) & = \langle \opw(\cV) \varphi_{\ell,r} , \varphi_{k,q}\rangle_{L^2(\rd)}\\ \nonumber
    & = \langle \cU_b^* \opw(\cV) \cU_b \, \cU_b^* \, \varphi_{\ell,r} , \cU_b^* \varphi_{k,q}\rangle_{L^2(\rd)}\\ \nonumber
    & = i^{k-\ell-q+r} \langle \opw(\cV_b)\, \psi_r \otimes \psi_\ell, \psi_q \otimes \psi_k \rangle_{L^2(\rd)}\\ \nonumber
    & = i^{k-\ell-q+r} \langle \cV_b, W(\psi_q\otimes \psi_k, \psi_r \otimes \psi_\ell) \rangle_{L^2(\re^4)}\\ \nonumber
    & = i^{k-\ell-q+r} \langle \cV_b, W(\psi_q, \psi_r) \otimes W(\psi_k, \psi_\ell) \rangle_{L^2(\re^4)}\\ \nonumber
    & = i^{k-\ell-q+r} \langle \cV_b, \Psi_{q,r} \otimes \Psi_{k,\ell} \rangle_{L^2(\re^4)}.
    \end{align}
    \end{proof}
    Let $q \in \Z_+$. By analogy with \eqref{ja15}, define the operator $\cU_{b,q} : L^2(\re) \to {\rm Ran}\,p_q$ by
    \bel{ja10}
    \cU_{b,q} f =  \sum_{k \in \Z_+} i^{k}\,\langle f, \psi_k\rangle_{L^2(\re)}\, \varphi_{k,q}, \quad f \in L^2(\re).
    \ee

    \becol{nf1}
     Let $q \in \Z_+$, $\cV \in \gwrf$. Then we have
     \bel{ja3}
     \cU_{b,q}^*\, \tqwv \,\cU_{b,q} =  \opw(v_{b,q}),
     \ee
     where $v_{b,q}$ is the symbol defined in \eqref{j19}.
      \eco
     \begin{proof}
     By \eqref{n1}, we have
     $$
     \langle \tqwv \varphi_{\ell,q}, \varphi_{k,q} \rangle_{L^2(\rd)} = m_{k,\ell;q}(\cV)
     = \I^{{k-\ell}} \langle v_{b,q} , \Psi_{k,\ell}\rangle_{L^2(\rd)} = \I^{{k-\ell}} \langle \opw(v_{b,q}) \psi_\ell, \psi_k\rangle_{L^2(\re)},
     $$
     which implies \eqref{ja3}.
    \end{proof}
     At the end of this section we consider the important case where the operator $\opw(v_{b,q})$ admits an anti-Wick symbol $\tilde{v}_{b,q} \in \gawrd$. Set
     \bel{n80}
     \omega_{b,q}(x,y) : = \tilde{v}_{b,q}(-b^{1/2}y, -b^{1/2}x), \quad (x,y) \in \rd.
     \ee
     Then, of course, $ \omega_{b,q} \in \gawrd$.
    \becol{nf10}
     Let $q \in \Z_+$,  $\cV \in \gwrf$. Assume that the operator $\opw(v_{b,q})$  has an anti-Wick symbol $\tilde{v}_{b,q} \in \gawrd$. Then, \bel{ja4} \cU_{b,0}^*\,p_0 \,\omega_{b,q} \,p_0\, \cU_{b,0} = \opaw(\tvbq) = \opw(v_{b,q}),
      \ee
      where $\cU_{b,0}$ is  the unitary operator defined in \eqref{ja10}, and $\omega_{b,q}$ is the symbol defined in \eqref{n80}.
     \eco
     \begin{proof}
      Assume at first that $\omega_{b,q} \in C_0^\infty(\rd)$. Then, by Corollary \ref{nf1}, the operator $p_0 \, \omega_{b,q} \, p_0 $ is unitarily equivalent under the operator $\cU_{b,0}$ to a $\Psi$DO with Weyl symbol
$$
    \int_{\rd} (\omega_{b,q} \circ \kappa_b)(x,y,\xi,\eta) \Psi_0(x,\xi) dx d\xi =
     \int_{\rd} \omega_{b,q}(b^{-1/2} (x-\eta), b^{-1/2}(\xi-y))
    \Psi_0(x,\xi) dx d\xi   =
    $$
    $$
    \frac{1}{\pi} \int_{\rd} \tilde{v}_{b,q} (y-\xi, \eta - x) e^{-(x^2+\xi^2)} dx d\xi =
    (\tilde{v}_{b,q} * \cG_1)(y,\eta), \quad (y,\eta) \in \rd,
    $$
    where we have taken into account \eqref{ja1}. Thus we get \eqref{ja4} for $\omega_{b,q} \in C_0^\infty(\rd)$.
    The result for general $\omega_{b,q} \in \gawrd$ is obtained by an approximation argument similar to the one applied in
the proof of \cite[Theorem 2.11]{PuRaVBl13}.
    \end{proof}
    The operator $p_0 \, \omega_{b,q} \, p_0$ admits a further useful unitary equivalence.  For $r \in \Z_+$ set
    \bel{d15}
    \cD_{b,r} : = {\rm L}_r\left(-\frac{\Delta}{2b}\right),
    \ee
    where  ${\rm L}_r$ is the Laguerre polynomial defined in \eqref{Dj3}. Thus, if $r=0$, we have $\cD_{b,0} = I$, and if $r \geq 1$, then $\cD_{b,r}$ is a partial differential operator with constant coefficients of order $2r$.

    \becol{df1}
      Assume that $\omega \in \gawrd$, and there exist $r \in \N$, $\zeta \in \cS'(\rd)$ such that
      \bel{d11a}
      \omega = \cD_{b,r} \, \zeta .
      \ee
      Then the operator $p_0 \, \omega \, p_0: {\rm Ran}\,p_0 \to {\rm Ran}\,p_0$ is unitarily equivalent to the operator $p_r \, \zeta  \, p_r: {\rm Ran}\,p_r \to {\rm Ran}\,p_r$.
      \eco
     \begin{proof} By \cite[Lemma 3.1]{BrPuRa04} and \eqref{d11a}, we have
     \bel{d11}
     \langle \omega \, \varphi_{k,0}, \varphi_{\ell,0}\rangle_{L^2(\rd)} = \langle (\cD_{b,r} \zeta) \, \varphi_{k,0}, \varphi_{\ell,0}\rangle_{L^2(\rd)}
    = \langle \zeta \, \varphi_{k,r}, \varphi_{\ell,r}\rangle_{L^2(\rd)}, \quad k,\ell \in \Z_+.
     \ee
     Let $u \in {\rm Ran}\,p_0$. Then $u = \sum_{k \in \Z_+} c_k \varphi_{k,0}$ with $\left\{c_k\right\}_{k \in \Z_+} \in \ell^2(\Z_+)$. Define the unitary operator $U_r: {\rm Ran}\,p_0 \to {\rm Ran}\,p_r$ by
     $U_r u : = \sum_{k \in \Z_+} c_k \varphi_{k,r}$. Then \eqref{d11} implies that
     $$
     p_0 \, \omega_{b,q} \, p_0 = U_r^*\,p_r \, \zeta \, p_r \, U_r.
     $$
     \end{proof}

     \section{Spectral properties of Weyl $\Psi$DOs with radial symbols}
     \label{s5}

     In this section we recall the fact that the Weyl $\Psi$DOs $\opwv$ with radial symmetric symbols $\cV$ are diagonalizable in the basis formed by  Hermite functions, and obtain explicit expressions for the eigenvalues of the operators $\opw(\cV)$ and $\opawv$. \\
     Let $n \geq 1$. We will say that the symbol $\cF \in \cS(\re^{2n})$ is {\em  radial}  if there exists a function $\cR_{\cF} : \re_+^n \to \C$ with $\re_+ : = [0,\infty)$, such that
     $$
     \cF(\bx, \bxi) = \cF(x_1,\ldots,x_n,\xi_1,\ldots,\xi_n) = \cR_{\cF}(x_1^2 + \xi_1^2, \ldots, x_n^2 + \xi_n^2), \quad (\bx, \bxi) \in \re^{2n}.
     $$
      We will say that $\cF \in \cS'(\re^{2n})$  is radial if for each $\cQ \in \cS(\re^{2n})$ there exists a radial symbol $\cR \in \cS(\re^{2n})$ such that
     $$
     (\cF, \cQ) = (\cF, \cR),
     $$
     $(\cdot, \cdot)$ being the usual pairing between $\cS$ and $\cS'$. Note that if $\cF \in \cS'(\re^n)$ is radial, then its Fourier transform $\widehat{\cF}$ is radial as well. Moreover, if the radial symbol $\cF$ is real-valued, then $\widehat{\cF}$ is real-valued as well.
    Set
     $$
     \cL_\bk(\bt) = \prod_{j=1}^n \left( {\rm L}_{k_j}(t_j) e^{-t_j/2}\right), \quad \bt = (t_1,\ldots,t_n) \in \re^n_+, \quad \quad \bk = (k_1,\ldots,k_n) \in \Z^n_+.
     $$
     As is well known, $\left\{\cL_\bk\right\}_{\bk \in \Z_+^n}$, and hence $\left\{(-1)^{|\bk|}\cL_\bk\right\}_{\bk \in \Z_+^n}$, are orthonormal bases in $L^2(\re^n_+)$. Note that the corresponding Fourier coefficients are defined not only for functions in $L^2(\re_+^n)$ but also for elements of  $L^1(\re_+^n) + L^\infty(\re_+^n)$, as well as for more general distributions (see e.g. \cite{Du90, JaPiPr17}).

    \beprl{np4}
     {\rm (i)} Let $\cF \in \gwrdn$ be a radial symbol. Then the operator $\opw(\cF)$ has eigenfunctions $\left\{\psi_\bk \right\}_{\bk \in \Z_+^n}$ with
     $$
     \psi_\bk(\bx) = \prod_{j=1}^n \psi_{k_j}(x_j), \quad \bx = (x_1,\ldots,x_n) \in \re^n, \quad \quad \bk = (k_1,\ldots,k_n) \in \Z^n_+,
     $$
where $\left\{\psi_q\right\}_{q \in \Z_+}$ are the Hermite functions defined in \eqref{Dradi}. The eigenfunctions $ \psi_\bk$ with $\bk \in \Z_+^n$
     correspond to  eigenvalues
    \begin{align} \label{n52}
     \mu_\bk^{\rm w}(\cF) & = \frac{(-1)^{|\bk|}}{2^n} \int_{\re_+^n} \cR_{\cF} (\bt/2) \, \cL_\bk(\bt)\,d\bt\\
      & = \int_{\re_+^n} \cR_{\widehat{\cF} }(2\bt) \, \cL_\bk(\bt)\,d\bt. \label{n56}
     \end{align}
     {\rm (ii)} Let $\cF : \Gamma_{\rm aw}(\re^{2n})$ be a radial anti-Wick symbol. Then the eigenfunctions $\left\{\psi_\bk\right\}_{\bk \in \Z_+^n}$
     of the operator $\opaw(\cF) = \opw(\cF * \cG_n)$ correspond to  eigenvalues
     \bel{n53}
     \mu_\bk^{\rm aw}(\cF)  =  \int_{\re_+^n} \cR_{\cF} (2\bt) \, \prod_{j=1}^n\left(\frac{t_j^{k_j} \,e^{-t_j}}{k_j !}\right)\,d\bt, \quad \bk \in \Z_+^n.
     \ee
     \epr
     {\em Remark}: In view of \eqref{1001}, it is not unnatural to express the  eigenvalues of $\opw(\cF)$ in terms of the Fourier transform $\widehat{\cF}$ of the symbol $\cF$, as in \eqref{n56}.
     \begin{proof}[Proof of Proposition \ref{np4}]
     We have
        \bel{n57}
        \langle \opw(\cF) \, \psi_\bl, \psi_\bk\rangle_{L^2(\re^{n})} = \langle \cF,  W(\psi_\bk, \psi_\bl)\rangle_{L^2(\re^{2n})}
        = \langle \cF, \otimes_{j=1}^n \Psi_{k_j, \ell_j} \rangle_{L^2(\re^{2n})}.
        \ee
        Due to the radial symmetry of $\cF$ and \eqref{n55}, we find that
        \bel{n58}
    \langle \cF, \otimes_{j=1}^n \Psi_{k_j, \ell_j}\rangle_{L^2(\re^{2n})} = \langle \cF, \otimes_{j=1}^n \Psi_{k_j}\rangle_{L^2(\re^{2n})} \prod_{j=1}^n  \delta_{k_j,\ell_j}.
    \ee
    By \eqref{n54},
    $$
    \langle \cF, \otimes_{j=1}^n \Psi_{k_j}\rangle_{L^2(\re^{2n})} =
    $$
    $$
    \frac{(-1)^{|\bk|}}{\pi^n} \int_{\re^{2n}} \cR_\cF(x_1^2 + \xi_1^2, \ldots, x_n^2 + \xi_n^2) \prod_{j=1}^n \left({\rm L}_{k_j}(2(x_j^2 + \xi_j^2))e^{-(x_j^2 + \xi_j^2)}\right)\, d\bx \, d\bxi.
    $$
    Changing the variables $x_j = r_j \cos{\theta_j}$, $\xi_j = r_j \sin{\theta_j}$, and then $t_j = 2 r_j^2$, $j=1,\ldots,n$, we obtain \eqref{n52}.
    In order to check \eqref{n56}, we first note that by the Parseval identity,
      $$
      \langle \cF, \otimes_{j=1}^n \Psi_{k_j}\rangle_{L^2(\re^{2n})} = \langle \widehat{\cF}, \otimes_{j=1}^n \widehat{\Psi}_{k_j}\rangle_{L^2(\re^{2n})}.
      $$
      By \cite[Eq. (3.6)]{PuRaVBl13}, we have
      \bel{d17}
      \widehat{\Psi}_k(\bw) = \frac{(-1)^k}{2} \Psi_k(\bw/2), \quad k \in \Z_+, \quad \bw \in \rd.
      \ee
      Therefore,
      $$
      \langle \widehat{\cF}, \otimes_{j=1}^n \widehat{\Psi}_{k_j}\rangle_{L^2(\re^{2n})} =
      $$
      $$
       \frac{1}{(2\pi)^n} \int_{\re^{2n}} \cR_{\widehat{\cF}}(x_1^2 + \xi_1^2, \ldots, x_n^2 + \xi_n^2) \prod_{j=1}^n \left({\rm L}_{k_j}((x_j^2 + \xi_j^2)/2)e^{-(x_j^2 + \xi_j^2)/4}\right)\, d\bx \, d\bxi,
    $$
    which implies \eqref{n56}.
    Let us now handle the anti-Wick case. Similarly to \eqref{n57} - \eqref{n58}, we have
    \bel{n60}
    \langle \opaw(\cF) \, \psi_\bl, \psi_\bk\rangle_{L^2(\re^{n})} = \langle \cF * \cG_n, \otimes_{j=1}^n \Psi_{k_j}\rangle_{L^2(\re^{2n})} \prod_{j=1}^n  \delta_{k_j,\ell_j}.
    \ee
     A simple calculation yields
    $$
     \langle \cF * \cG_n, \otimes_{j=1}^n \Psi_{k_j}\rangle_{L^2(\re^{2n})} =  \langle \cF, \cG_n *(\otimes_{j=1}^n \Psi_{k_j})\rangle_{L^2(\re^{2n})} =
    $$
    \bel{n61}
    \frac{(-1)^{|\bk|}}{4^n} \int_{\re_+^{n}} \cR_\cF(\btau/2) \prod_{j=1}^n \left(e^{-\tau_j/2} \int_0^\infty g(s\tau_j)\,{\rm L}_{k_j}(s)e^{-s}ds\right)\, d\tau \ee
    where
    $$
    g(y) : = (2\pi)^{-1} \int_0^{2\pi} e^{\sqrt{y} \, \cos{\theta}} d\theta, \quad y \geq 0.
    $$
    The function $g$ extends to an entire function satisfying
    $$
    g(z) = \sum_{j=0}^\infty \frac{1}{(j!)^2} \left(\frac{z}{4}\right)^j, \quad z \in \C.
    $$
    Using the first representation of the Laguerre polynomials in \eqref{Dj3}, we get
    \bel{n59}
    e^{-\tau/2} \,  \int_0^\infty g(s\tau)\,{\rm L}_{k}(s)e^{-s} ds = \frac{(-1)^k}{k!} e^{-\tau/4} \left(\frac{\tau}{4}\right)^k, \quad \tau \geq 0, \quad k \in \Z_+.
    \ee
    Inserting \eqref{n59} into \eqref{n61}, changing the variables $\btau = 4\bt$, and then inserting \eqref{n61} into \eqref{n60}, we get \eqref{n53}.
    \end{proof}

    {\em Remark}: In view of \eqref{n52}, \eqref{aa1} and \eqref{n54}, relation \eqref{n53} is equivalent to the fact that the Husimi function $\cG_1 * \Psi_k$
    can be written as
    \bel{d20}
   (\cG_1 * \Psi_k)(x,\xi) = \frac{1}{(2\pi)\,k!} \left(\frac{x^2+\xi^2}{2}\right)^k\,e^{-(x^2+\xi^2)/2}, \quad (x,\xi) \in \rd, \quad k \in \Z_+.
    \ee
    Probably, \eqref{d20} is known to the experts but since we could not find it in the literature, we include a proof of \eqref{n53}.

    \becol{nf50}
    Let $\cF \in \gwrdn$ be a radial symbol. \\
    {\rm (i)}
    Then  $\opw(\cF) \geq 0$ if and only if the Fourier coefficients
    of the function $ \cR_{\cF}( \bt/2)$, $\bt \in \re_+^n$, with respect to the system $\left\{(-1)^{\bk|}\cL_\bk\right\}_{\bk \in \Z_+^n}$, are non-negative.\\
    {\rm (ii)} Equivalently, we have  $\opw(\cF) \geq 0$ if and only if the Fourier coefficients
    of the function $\cR_{\widehat{\cF}}(2 \bt)$, $\bt \in \re_+^n$, with respect to the system $\left\{\cL_\bk\right\}_{\bk \in \Z_+^n}$ are non-negative.
    \eco
    \begin{proof}
    The first part  follows from \eqref{n52}, and the second one from \eqref{n56}.
    \end{proof}
    {\em Remark}: The criterion in the first part of Corollary \ref{nf50} has been established in \cite{Un16} for the one-dimensional case $n=1$, and in  \cite{JaPiPr17} for the multidimensional case. Presumably, at heuristic level, these facts have been known since long ago.
    \becol{nf51}
     Let $\cF \in \gawrdn$ be a radial symbol. Then  $\opaw(\cF) \geq 0$ if and only if
    $$
    \int_{\re_+^n} \cR_{\cF} (2\bt) \, \prod_{j=1}^n\left(t_j^{k_j} \,e^{-t_j}\right)\,d\bt \geq 0, \quad \bk \in \Z_+^n.
    $$
    \eco
    \begin{proof}
    The claim  follows from \eqref{n53}.
    \end{proof}

    In the case $n=1$, Proposition \ref{np4} tells us that the matrix $\left\{\langle \opw(\cF) \psi_\ell, \psi_j \rangle_{L^2(\re)}\right\}_{j,\ell \in \Z_+}$ is diagonal, provided that the symbol $\cF$ is radial. This fact admits an obvious generalization to the case where $\cF(r\cos{\theta}, r\sin{\theta})$ has a finite
    Fourier series with respect to the angle $\theta$.

     \beprl{jf3} Let $\cF \in \gwrd$. Assume that
     there exists $K \in \Z_+$ such that
     $$
     \cF (r\cos{\theta},r\sin{\theta}) = \sum_{k = -K}^K \cF_k(r) \E^{\I k\theta}, \quad r \in [0,\infty), \quad \theta \in [0,2\pi).
     $$
     Then the matrix $\left\{\langle \opw(\cF) \psi_\ell, \psi_j\rangle_{L^2(\re)}\right\}_{j,\ell \in \Z_+}$ is $(2K+1)$-diagonal.
     \epr
     Of course, Proposition \ref{jf3} admits an obvious extension to any dimension $n \geq 1$.\\
    Proposition \ref{np4} allows us to  calculate explicitly the spectrum of the perturbed Landau Hamiltonian $H_\cV = H_0 + \opw(\cV)$ provided that the symbol $\cV_b$ is radial.

    \becol{nf8}
    Let $\cV \in \gwrf$. Assume that the symbol $\cV_b = \cV \circ \kappa_b$ is radial. Then the operator $H_\cV$, normal on the domain $\gD(H_0)$, has eigenfunctions $\left\{\varphi_{k,q}\right\}_{(k,q) \in \Z_+^2}$ which correspond to eigenvalues
    $$
    \Lambda_q + \mu^{\rm w}_{(q,k)}(\cV_b), \quad (q,k) \in \Z_+^2.
    $$
    \eco
    \begin{proof}
    By Proposition \ref{jp1} we have
    $$
    H_\cV = \cU_b\,(((b\gh) \otimes I_y) + \opw(\cV_b))\,\cU_b^*,
    $$
    while Lemma \ref{jl1} and Proposition \ref{np4} imply
    $$
    \cU_b\,(((b\gh) \otimes I_y) + \opw(\cV_b))\,\cU_b^*\,\varphi_{k,q} = (\Lambda_q + \mu^{\rm w}_{(q,k)}(\cV_b))\,\varphi_{k,q}, \quad (k,q) \in \Z_+^2.
    $$
    \end{proof}

     \section{Eigenvalue distribution for the operator $H_\cV$}
     \label{s4}
 \subsection{Main results}
    \label{ss4}
In this section we study the eigenvalue asymptotics  near a fixed Landau level $\Lambda_q$, $q \in \Z_+$, of the perturbed Landau Hamiltonian $H_\cV = H_0 + \opwv$ with appropriate symbol $\cV$ such that $\opwv$ is bounded, self-adjoint  in $L^2(\rd)$, and relatively compact with respect to $H_0$.

Proposition \ref{np51} below shows, in particular, that the eigenvalues of $H_{-\cV}$ with $\opwv \geq 0$ and  $\cV \in \cS(\re^4)$, adjoining the Landau levels $\Lambda_q$, $q \in \Z_+$, may have quite arbitrary asymptotic  behavior; they may not accumulate at a given $\Lambda_q$, or may accumulate at any prescribed sufficiently fast accumulation rate. \\
Let $T$ be an operator, self-adjoint in a given Hilbert space, and $(\mu_1,\mu_2)$ be an open interval with $-\infty \leq \mu_1 < \mu_2 \leq \infty$. Set
$$
N_{(\mu_1,\mu_2)}(T) : = {\rm Tr}\,\one_{(\mu_1,\mu_2)}(T).
$$
Here and in the sequel $\one_S$ denotes the characteristic function of the set $S$. Thus, $\one_{(\mu_1,\mu_2)}(T)$ is just the spectral projection of $T$ corresponding to the interval $(\mu_1,\mu_2)$. If $(\mu_1,\mu_2) \cap \sigma_{\rm ess}(T) = \emptyset$, then $N_{(\mu_1,\mu_2)}(T)$ is the number of the
eigenvalues of $T$, lying on $(\mu_1,\mu_2)$ and counted with the multiplicities. If, moreover, $\overline{(\mu_1,\mu_2)} \cap \sigma_{\rm ess}(T) = \emptyset$, then $N_{(\mu_1,\mu_2)}(T) < \infty$. \\

    \beprl{np51}
    Let $\left\{m_q\right\}_{q \in \Z_+}$ be a given sequence with $m_q \in \Z_+ \cup \{\infty\}$, $q \in \Z_+$. Then there exists a symbol $\cV \in \cS(\re^4)$ such that $\cV \circ \kappa_b$ is radial, $\opwv \geq 0$, and
    \bel{n75}
    N_{I_q^-}(H_{-\cV}) = m_q, \quad q \in \Z_+,
    \ee
    where $I_q^-$ are the intervals defined in \eqref{fin30}.
    \epr
     \begin{proof}
    Set
    $$
    \cZ : = \left\{q \in \Z_+ \, | \, m_q \neq 0\right\}.
    $$
    If $\cZ = \emptyset$, it suffices to take $\cV = 0$. Assume $\cZ \neq \emptyset$. Let $\left\{c_{1,q}\right\}_{q \in \cZ}$ be a decreasing set of numbers $c_{1,q} \in (0,2b)$; if $0 \in \cZ$, we can omit the condition $c_{1,0} < 2b$. If $\# \cZ = \infty$, we assume that $\lim_{q \to \infty} q^m c_{1,q} = 0$ for any $m \in \N$. Fix $q \in \cZ$. Let $\left\{c_{2,k}\right\}_{k = 0}^{m_q-1}$ be a decreasing set of numbers $c_{2,k} \in (0,1)$. If $m_q = \infty$, we assume that $\lim_{k \to \infty} k^m c_{2,k} = 0$ for any $m \in \N$.
    Now put
    $$
    C_{k,q} : =  c_{1,q} c_{2,k}, \quad k=0,\ldots, m_q-1, \quad q \in \cZ,
    $$
    \bel{d22}
    \cV : = (2\pi)^2 \left(\sum_{q \in \cZ} \sum_{k=0}^{m_q-1} C_{k,q} \Psi_q \otimes \Psi_k\right) \circ \kappa_b^{-1}.
    \ee
    Then,  $\cV \in \cS(\re^4)$ (see \cite[Theorem 2.5 (a)]{Du90}), and, evidently, $\cV \circ \kappa_b$ is  radial. Moreover, by Corollary \ref{nf8}, $\opwv \geq  0$ and
    \bel{d23}
    \sigma(H_{-\cV}) \cap I_q^- = \left\{
    \begin{array} {l}
    \emptyset \quad {\rm if} \quad q \not \in \cZ,\\ [2mm]
    \cup_{k=0}^{m_q-1} \left\{\Lambda_q - C_{k,q}\right\} \quad {\rm if} \quad q \in \cZ.
    \end{array}
    \right.
    \ee
    By construction, all the eigenvalues $\Lambda_q - C_{k,q}$, $k = 0,\ldots,m_q-1$, lying in $I_q^-$ with $q \in \cZ$, are simple.
    Therefore, \eqref{n75} holds true.
    \end{proof}
    {\em Remarks}: (i) The proof of Proposition \ref{np51} contains an explicit construction of a negative compact perturbation of $H_0$ so that the eigenvalues $H_{-\cV}$ may accumulate to $\Lambda_q$ only from below. Of course, it is possible to construct positive compact perturbations whose eigenvalues may accumulate to $\Lambda_q$ only from above, or self-adjoint compact perturbation with non-trivial positive and negative parts whose eigenvalues may accumulate to $\Lambda_q$ both from above and from below. \\
    (ii) It is easy to check that if for some $q \in \Z_+$ we have $m_q < \infty$, then the Landau level remains an eigenvalue of infinite multiplicity of $H_{-\cV}$. In contrast to this situation, it was shown in \cite{KlRa09} that if $\opw(V) = V$ is local, i.e. if $V = V(x,y)$, $(x,y) \in \rd$, and $V \leq 0$, $\|V\|_{L^\infty(\rd)} < 2b$, then ${\rm Ker}\, (H_{\pm V} - \Lambda_q I) = \{0\}$. \\
    (iii) It is an elementary fact that if ${\bf c} : = \left\{c_{k,q}\right\}_{(k,q) \in \Z_+^2} \in \ell^\infty(\Z_+^2)$, then the operator
    \bel{fin3}
    T : = \sum_{(k,q) \in \Z_+^2} c_{k,q} \langle\cdot,\varphi_{k,q}\rangle \varphi_{k,q}
    \ee
    is bounded in $L^2(\rd)$, $\|T\| = \|{\bf c}\|_{\ell^\infty(\Z_+^2)}$, the eigenvalues of $T$ coincide with the set $\left\{c_{k,q}\right\}_{(k,q) \in \Z_+^2}$,  while the eigenvalues of $H_0 + T$ coincide with $\left\{\Lambda_q + c_{k,q}\right\}_{(k,q) \in \Z_+^2}$. Choosing appropriately the sequence ${\bf c}$, we can easily obtain operators with various spectral properties. If, for example, $\left\{r_k\right\}_{k \in \Z_+}$ is the set of the rational numbers on $(0,2b)$, and
    $$
    c_{k,q} : = r_k, \quad (k,q) \in \Z_+^2,
    $$
    then the operator $H_0 + T$ will have purely dense point spectrum $\sigma(H_0+T) = [b,\infty)$. Of course, in this case $T$ is not relatively compact with respect to $H_0$. \\
    However, in the general case it would not be possible to interpret $T$ as a Weyl $\Psi$DO with a regular symbol. Our assumption in Proposition \ref{np51}  that the sequence $\left\{C_{k,q}\right\}$ decays  rapidly implies that the symbol $\cV$ defined in \eqref{d22} belongs to the class $\cS(\re^4)$. If, for example, we assume instead that we have only
    $$
    \sum_{q \in \cZ} \sum_{k=0}^{m_q-1} c_{1,q}^2 c_{2,k}^2 < \infty,
    $$
    then the symbol defined in \eqref{d22} generates by Proposition \ref{Dprvbp1c} a Hilbert-Schmidt operator. In this case, \eqref{d23} still holds true, just the eigenvalues of $H_{-\cV}$ lying in a given gap $I_q^-$ may accumulate more slowly to $\Lambda_q$.\\

   In the following two theorems we assume that the operator $\opwv$ satisfies two general assumptions:\\

   $\bf{H_1}$: The operator $\opwv$ is bounded and self-adjoint, and the operator $\opwv H_0^{-1}$ is compact in $L^2(\rd)$. Moreover, $\opwv \geq 0$.\\

    $\bf{H_{2,q,r}}$: Let $q \in \Z_+$. Then the operator $\opw(v_{b,q})$, $v_{b,q}$ being defined in \eqref{j19}, has an anti-Wick symbol $\tilde{v}_{b,q} \in \gawrd$. Moreover, there exists $r \in \Z_+$ and $0 \leq \zeta_{b,q,r} \in L^\infty(\rd)$ such that \eqref{d16} holds true, i.e we have
     $$
     \omega_{b,q} = \cD_{b,r}\,\zeta_{b,q,r}
     $$
    where $\omega_{b,q}$ is the symbol defined in \eqref{n80}, and $\cD_{b,r}$ is the differential operator defined in \eqref{d15}.  \\

    {\em Remarks}: (i) In what follows we will write $\zeta$ instead of $\zeta_{b,q,r}$.\\
    (ii) It is easy to check that for each $q,r \in \Z_+$ there exist symbols $\cV \in \gwrf$ satisfying Assumptions $\bf{H_{1}}$ and $\bf{H_{2,q,r}}$. A simple example can be constructed as follows. Pick $0 \leq \zeta \in C^\infty(\rd)$, bounded together with all its derivatives. Set $\omega = \cD_{b, r}\,\zeta$,
    $\tilde{v}(x,y) : = \omega(-b^{-1/2}y, b^{-1/2} x)$, $(x,y) \in \rd$, $v : = \tilde{v} * \cG_1$, and
    $$
    \cV : = 2\pi\left(\Psi_q \otimes v\right) \circ \kappa_b^{-1}.
    $$
    Then, according to \eqref{j19}, we have $v_{b,q} = v$, and hence $\cV$ satisfies $\bf{H_{1}}$ and $\bf{H_{2,q,r}}$. However, if we consider the operator defined in \eqref{d22}, and assume that for a certain $q \in \Z_+$ we have $0 < m_q < \infty$, then the corresponding $\opw(v_{b,q})$ {\rm does not admit} an anti-Wick symbol $\tilde{v}_{b,q} \in \gawrd$. Indeed, in this case we have
    $$
    \opw(v_{b,q}) = 2\pi \sum_{k=0}^{m_q-1} C_{k,q} \Psi_k.
    $$
    If $v_{b,q} = \tilde{v}_{b,q} * \cG_1$ with $\tilde{v}_{b,q} \in \cS'(\rd)$, then \eqref{d17} easily implies that the Fourier transform of $\tilde{v}_{b,q}$ is a polynomial so that $\tilde{v}_{b,q} \in \cE'(\rd)$ with ${\rm supp}\,(\tilde{v}_{b,q}) = \{0\}$.\\
    (iii) As we will see in the proof of Theorem \ref{np7} below, the Toeplitz operator $\tqw(\cV)$ is the effective Hamiltonian which governs the eigenvalue asymptotics of  $H_{\pm \cV}$ near the Landau level $\Lambda_q$, $q \in \Z$.  The operator $\tqw(\cV) = p_q \opwv p_q$ is an appropriate restriction of
    non-local $\Psi$DO $\opwv$, and is unitarily equivalent by Corollary \ref{nf1} to $\opw(v_{b,q})$. By our assumption, $\opw(v_{b,q})$ admits an anti-Wick symbol $\tilde{v}_{b,q}$ and, hence, by Corollary \ref{nf10} it is unitarily equivalent to $p_0 \, \omega_{b,q} \, p_0$, a restriction of the {\em local} multiplier $\omega_{b,q}$. Thus, the existence of an anti-Wick symbol $\tilde{v}_{b,q}$ of $\opw(v_{b,q})$ allows us to replace, in a certain sense, the non-local operator $\opw(\cV)$ by the local one $\omega_{b,q}$ in the asymptotic analysis of the  eigenvalue distribution of $H_{\pm\cV}$ near $\Lambda_q$. As mentioned in the Introduction, similar substitutions of non-local potentials by local ones have been considered in the physics literature (see e.g.
    \cite{CoArMa70, SoKi97, ChMoKiChKiSo14}).\\
     (iv) We introduce the passage from $\omega_{b,q}$ to $\zeta$ in \eqref{d16} in particular due to our requirement that $\zeta$ is non-negative: it may happen that $\zeta \geq 0$ while $\omega_{b,q}$ is not sign-definite. \\

     In Theorem \ref{nth1} (resp., Theorem \ref{nth2}) below we  study the eigenvalue asymptotics for the operators $H_{\pm \cV}$ at a given Landau level $\Lambda_q$, $q \in \cZ_+$, under Assumptions  $\bf{H_1}$ and $\bf{H_{2,q,r}}$, supposing in addition that $\zeta$ is compactly supported  (resp., that $\zeta$ decays exponentially at infinity).  Since $\opwv \geq 0$ by Assumption  $\bf{H_1}$, the eigenvalues of the operator $H_\cV$ may accumulate to given Landau level $\Lambda_q$ only from above, while the eigenvalues of $H_{-\cV}$ may accumulate to $\Lambda_q$ only from below, as mentioned in the Introduction,. We recall the notations $\left\{\lambda^\pm_{k,q}(\cV)\right\}_{k=0}^{m_q^\pm-1}$ of the operator $H_{\pm \cV}$ lying on the interval $I_q^\pm$, $q \in \Z_+$.

    For the formulation of our first theorem we need  the notion of {\em a logarithmic capacity} $\gC(K)$ of a compact set $K \subset \rd$ (see e.g. \cite[Chapter 5]{Ran95}).
    Let $M(K)$ denote the set of probability measures on $K$. Then we have
$\gC(K) : = e^{-{\mathcal I}(K)}$ where
$$
{\mathcal I}(K) : = \inf_{\mu \in M(K)}  \int_{K \times K} \ln{|x-y|^{-1}} d\mu(x) d\mu(y).
$$
If $K_1 \subset K_2$, then, evidently, $\gC(K_1) \leq \gC(K_2)$.\\

    \begin{theorem} \label{nth1}
    Let $\bf{H_{1}}$ and $\bf{H_{2,q,r}}$ with fixed $q, r \in \Z_+$,  hold true.
    Assume that $\zeta \in C(\rd)$, ${\rm supp}\,\zeta = \overline{\Omega}$ where $\Omega \subset \rd$ is a bounded domain with Lipschitz boundary $\partial \Omega$, and $\zeta > 0$ on $\Omega$.  Then $m_q^\pm = \infty$, and we have
    \bel{1}
    \ln{\left(\pm\left(\lambda_{k,q}^\pm(\cV) - \Lambda_q\right)\right)} = -k\ln{k} + \left(1 + \ln{\left(\frac{b\,\gC(\overline{\Omega})^2}{2}\right)}\right)k + o(k), \quad k \to \infty.
    \ee
    \end{theorem}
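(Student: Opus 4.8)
\emph{Sketch of the argument.} The plan is to reduce Theorem~\ref{nth1} to the eigenvalue asymptotics of a Toeplitz operator $p_r\,\zeta\,p_r$ with compactly supported, non-negative symbol, using the Birman--Schwinger reduction of Proposition~\ref{np7} together with the chain of unitary equivalences of Section~\ref{s3}, and then to obtain these asymptotics by bracketing $\zeta$ between characteristic functions and invoking (and adapting) the methods of \cite{RaWa02}, \cite{FiPu06} and \cite{LuRa15}, in which the logarithmic capacity $\gC(\overline\Omega)$ enters through a Chebyshev-type extremal problem.

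First I would invoke Proposition~\ref{np7}: since, by Assumption $\bf{H_1}$, $\opwv\ge 0$ is bounded and relatively compact with respect to $H_0$, the discrete eigenvalues of $H_{\pm\cV}$ adjoining $\Lambda_q$ are governed by the eigenvalues $\lambda_0(\tqwv)\ge\lambda_1(\tqwv)\ge\cdots\ge 0$ of the effective Hamiltonian $\tqwv=p_q\,\opwv\,p_q$; for the logarithmic asymptotics \eqref{1} it suffices to know that $\ln\bigl(\pm(\lambda_{k,q}^\pm(\cV)-\Lambda_q)\bigr)=\ln\lambda_k(\tqwv)+o(1)$ as $k\to\infty$, which is much weaker than the full conclusion of Proposition~\ref{np7}. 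Next I would chain the unitary equivalences of Section~\ref{s3}: by Corollary~\ref{nf1}, $\tqwv$ is unitarily equivalent to $\opw(v_{b,q})$; by Assumption $\bf{H_{2,q,r}}$ and Corollary~\ref{nf10}, $\opw(v_{b,q})=\opaw(\tvbq)$ is unitarily equivalent to $p_0\,\omega_{b,q}\,p_0$; and by the relation $\omega_{b,q}=\cD_{b,r}\,\zeta$ from $\bf{H_{2,q,r}}$ together with Corollary~\ref{df1}, $p_0\,\omega_{b,q}\,p_0$ is unitarily equivalent to $p_r\,\zeta\,p_r$. Hence $\tqwv$ is unitarily equivalent to $p_r\,\zeta\,p_r$. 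Since $\zeta\ge 0$ does not vanish identically and $\zeta>0$ on the nonempty open set $\Omega$, while every nonzero element of ${\rm Ran}\,p_r$ is real-analytic (it is an eigenfunction of the elliptic operator $H_0$) and hence cannot vanish on $\Omega$, the operator $p_r\,\zeta\,p_r$ is injective on the infinite-dimensional space ${\rm Ran}\,p_r$; in particular it has infinitely many positive eigenvalues, which with the reduction above yields $m_q^\pm=\infty$. It remains to prove
$$
\ln\lambda_k(p_r\,\zeta\,p_r)=-k\ln k+\Bigl(1+\ln\Bigl(\tfrac{b\,\gC(\overline\Omega)^2}{2}\Bigr)\Bigr)k+o(k),\qquad k\to\infty.
$$

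To establish this I would first remove the dependence on the particular symbol $\zeta$. Choosing a smooth domain $\Omega'$ with $\overline{\Omega'}\subset\Omega$ on which $\zeta\ge c_{\Omega'}>0$, one has $c_{\Omega'}\,\one_{\Omega'}\le\zeta\le\|\zeta\|_{L^\infty(\rd)}\,\one_{\overline\Omega}$, so the min--max principle gives
$$
c_{\Omega'}\,\lambda_k\bigl(p_r\one_{\Omega'}p_r\bigr)\le\lambda_k\bigl(p_r\zeta p_r\bigr)\le\|\zeta\|_{L^\infty(\rd)}\,\lambda_k\bigl(p_r\one_{\overline\Omega}p_r\bigr),
$$
the multiplicative constants contributing only to the $o(k)$ term after taking logarithms. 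For the Toeplitz operator with a characteristic-function symbol I would use the explicit orthonormal basis $\{\varphi_{k,r}\}$ of ${\rm Ran}\,p_r$ (see the Remark after \eqref{j27}), whose $k$-dependence is carried by the normalizing factors $\propto (b/2)^{k/2}/\sqrt{k!}$ and by a Laguerre polynomial of fixed degree; the matrix elements are Gaussian-weighted integrals over the domain of products of these basis functions, and the decay of the singular values of the resulting matrix is governed, as in \cite{RaWa02} for $r=0$ and its extensions \cite{BrPuRa04}, \cite{LuRa15} for general $r$ (where the Laguerre factor affects only lower-order terms), by the extremal problem $\inf\{\max_{z\in\overline\Omega}|P(z)|^2:\ P\ \text{monic of degree}\ k\}$, whose $k$-th root tends to $\gC(\overline\Omega)^2$. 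Combining this with Stirling's formula applied to the normalizing factors produces $-k\ln k+\bigl(1+\ln(b\gC(\overline\Omega)^2/2)\bigr)k+o(k)$ for $p_r\one_{\overline\Omega}p_r$, and the same expression with $\gC(\overline{\Omega'})$ in place of $\gC(\overline\Omega)$ for $p_r\one_{\Omega'}p_r$. Finally, by the standard continuity of the logarithmic capacity, $\gC(\overline{\Omega'})\to\gC(\overline\Omega)$ as $\Omega'\uparrow\Omega$ (the Lipschitz regularity of $\partial\Omega$ guaranteeing that $\overline\Omega$ is regular for the capacity), so the two brackets above carry matching first two asymptotic terms, which yields the displayed formula for $\lambda_k(p_r\zeta p_r)$ and hence \eqref{1}.

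The main obstacle is the last, quantitative step: showing that the off-diagonal structure of the Toeplitz matrix genuinely lowers the $k$-th eigenvalue from the crude diagonal bound down to the Chebyshev/capacity level, and, for $r\ge 1$, that the Laguerre weight in $\varphi_{k,r}$ does not perturb the first two asymptotic terms. This is precisely where one must develop, rather than merely quote, the methods of \cite{RaWa02}, \cite{FiPu06} and \cite{LuRa15}, since $\one_{\overline\Omega}$ has compact support and no standard pseudodifferential calculus is available; the passage from the counting-function asymptotics obtained there to the individual-eigenvalue statement \eqref{1} is then routine.
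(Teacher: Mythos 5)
Your overall strategy matches the paper's: reduce via Proposition~\ref{np7} and the chain of unitary equivalences to the Toeplitz operator $p_r\,\zeta\,p_r$, bracket $\zeta$ between multiples of characteristic functions of Lipschitz domains, invoke the capacity asymptotics for $p_r\,\one_{\overline\Omega}\,p_r$, and pass to the limit over inner domains. However, there are two significant mismatches, one cosmetic and one a genuine gap.

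The cosmetic one: you identify the eigenvalue asymptotics of $p_r\,\one_{\overline\Omega}\,p_r$ (your "main obstacle") as something that must be developed by extending \cite{RaWa02}, \cite{FiPu06}, \cite{LuRa15}. In fact this is exactly the statement of \cite[Lemma~2]{FiPu06}, which the paper quotes verbatim as Proposition~\ref{pfin2}, valid for all $q\in\Z_+$ and any bounded Lipschitz domain. Nothing needs to be re-derived here, including the insensitivity to the Laguerre factor in $\varphi_{k,r}$.

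The genuine gap is your treatment of the inner approximation of the capacity. You assert that $\gC(\overline{\Omega'})\to\gC(\overline\Omega)$ as $\Omega'\uparrow\Omega$ "by the standard continuity of the logarithmic capacity," with a parenthetical appeal to Lipschitz regularity of $\partial\Omega$ implying regularity for the capacity. This does not follow from standard continuity: for an increasing sequence of compacts exhausting the open set $\Omega$, Choquet-type continuity only yields $\gC(K_j)\to\gC(\Omega)$ (the inner capacity of the \emph{open} set), and the identity $\gC(\Omega)=\gC(\overline\Omega)$ is precisely what must be established. Regularity of $\partial\Omega$ for the Dirichlet problem (which is what the Wiener/cone criterion for Lipschitz boundaries gives) is a statement about the exterior harmonic measure, and does not by itself give an approximation of $\gC(\overline\Omega)$ by capacities of compact subsets of $\Omega$. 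This is in fact the paper's new technical contribution: Proposition~\ref{pfin1} and Corollary~\ref{corfin1} construct a sequence of $C^2$ Jordan curves $\Gamma_j\subset\Omega$ (then thickened to Lipschitz domains $\Omega_j$) with $\gC(\Gamma_j)\to\gC(\overline\Omega)$, using Pommerenke's two-sided estimate relating $\gC(K)$ to the $j$-point diameter $\Delta_j(K)$ for \emph{connected} compacts $K$ — the connectedness of the approximating curves is essential to make the lower bound work. The paper explicitly credits this idea to Ransford and the proof is not routine. Without this, your sandwich argument only gives matching first terms, not the matching second term $\bigl(1+\ln(b\gC(\overline\Omega)^2/2)\bigr)k$, so \eqref{1} is not established.

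A minor remark: your injectivity argument for $p_r\zeta p_r$ via real-analyticity of elements of ${\rm Ran}\,p_r$ is correct but unnecessary; the paper obtains $m_q^\pm=\infty$ and ${\rm rank}(p_r\zeta p_r)=\infty$ directly in Proposition~\ref{np7} from the counting-function asymptotics of \cite{RaWa02}, which you will need anyway.
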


    {\em Remarks}: 
    (i) Assume that $\zeta \in L^\infty(\rd)$, ${\rm supp}\,\zeta$ is compact and for some $C>0$, $r>0$,  and ${\bf x}_0 \in \rd$ we have
    $\zeta({\bf x}) \geq C \one_{B_r({\bf x}_0)}({\bf x})$ where $B_r({\bf x}_0) : = \left\{{\bf x} \in \rd \, | \, |{\bf x} - {\bf x}_0| < r\right\}$.
    Then \cite{RaWa02} implies
     \bel{7}
     \ln{\left(\pm\left(\lambda_{k,q}^\pm(\cV) - \Lambda_q\right)\right)} = -k\ln{k}\,(1 + o(1)), \quad k \to \infty.
     \ee
    which is a less precise version of \eqref{1}. \\
    (ii) By \cite{Te19}, Theorem \ref{nth1} is valid under more general assumptions on ${\rm supp}\,\zeta = \overline{\Omega}$. Namely, we can suppose that there exists a compact set $Z \subset \Omega$ such that $\zeta > 0$ only on $\Omega \setminus Z$ and not on the entire domain $\Omega$. We omit the details of the proof of this extension for the sake of the simplicity of the exposition.\\

Our next theorem  concerns the case where $\zeta$ decays exponentially at infinity. Now we assume  that $\zeta \in C(\rd)$ and there exist $\beta>0$ and $\gamma > 0$ such that
     \bel{3}
     \ln{\zeta(\bx)} = - \gamma |\bx|^{2\beta} + \cO(\ln{|\bx|}), \quad |\bx| \to \infty,
     \ee
    uniformly with respect to $\frac{\bx}{|\bx|} \in {\mathbb S}^1$. Set $\mu : = \gamma (2/b)^\beta $ where $b>0$ is the constant scalar magnetic field.

     \begin{theorem} \label{nth2} Let $\bf{H_{1}}$ and $\bf{H_{2,q,r}}$ with fixed $q, r \in \Z_+$,  hold true.
    Assume
     that $\zeta$ satisfies \eqref{3}. Then $m_q^\pm = \infty$ and we have: \\
     {\rm (i)} If $\beta \in (0,1)$, then there exist constants $f_j = f_j(\beta, \mu)$, $j \in \N$, with $f_1 = \mu$, such that
     \bel{4}
     \ln{\left(\pm\left(\lambda_{k,q}^\pm(\cV) - \Lambda_q\right)\right)} = - \sum_{1 \leq j < \frac{1}{1-\beta}} f_j k^{(\beta-1)j + 1} + \cO(\ln{k}), \quad k \to \infty.
     \ee
     {\rm (ii)} If $\beta = 1$, then
     \bel{5}
     \ln{\left(\pm\left(\lambda_{k,q}^\pm(\cV) - \Lambda_q\right)\right)} = - \left(\ln{(1+\mu)}\right) k + \cO(\ln{k}), \quad k \to \infty.
     \ee
     {\rm (iii)} If $\beta \in (1,\infty)$, then there exist constants $g_j = g_j(\beta, \mu)$, $j \in \N$,  such that
     $$
     \ln{\left(\pm\left(\lambda_{k,q}^\pm(\cV) - \Lambda_q\right)\right)} =
     $$
     \bel{6}
    - \frac{\beta - 1}{\beta} k \ln{k} + \left(\frac{\beta - 1 - \ln{(\mu\beta)}}{\beta}\right) k - \sum_{1 \leq j < \frac{\beta}{\beta-1}} g_j k^{(\frac{1}{\beta}-1)j + 1} + \cO(\ln{k}), \quad k \to \infty.
     \ee
      \end{theorem}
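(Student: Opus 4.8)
The plan is to reduce the statement, via the unitary equivalences of Section~\ref{s3}, to the spectral asymptotics of a Toeplitz operator with a non-negative, exponentially decaying \emph{local} symbol, and then to analyse the latter by a Laplace/saddle-point computation, in the spirit of \cite{RaWa02, FiPu06, LuRa15}. First I would assemble the chain of unitary equivalences: by Corollary~\ref{nf1}, $\tqwv$ is unitarily equivalent to $\opw(v_{b,q})$ with $v_{b,q}$ as in \eqref{j19}; by Assumption~$\bf{H_{2,q,r}}$ and Corollary~\ref{nf10}, the latter is unitarily equivalent to $p_0\,\omega_{b,q}\,p_0$ with $\omega_{b,q}$ as in \eqref{n80}; and by \eqref{d16} together with Corollary~\ref{df1}, this is unitarily equivalent to $p_r\,\zeta\,p_r$, where $0\le\zeta\in L^\infty(\rd)$ obeys \eqref{3}. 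Since $\zeta>0$ near infinity and the elements of ${\rm Ran}\,p_r$ are real-analytic, $\langle p_r\,\zeta\,p_r\,u,u\rangle>0$ for every $0\ne u\in{\rm Ran}\,p_r$, so $p_r\,\zeta\,p_r$ is compact of infinite rank; let $\{s_k\}_{k\in\Z_+}$ be the non-increasing enumeration of its positive eigenvalues. By Proposition~\ref{np7} it then follows that $m_q^\pm=\infty$ and that $\ln\bigl(\pm(\lambda^\pm_{k,q}(\cV)-\Lambda_q)\bigr)=\ln s_k+o(1)$ as $k\to\infty$, so it remains only to determine the asymptotics of $\ln s_k$ to within $O(\ln k)$.

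Next I would bracket $p_r\,\zeta\,p_r$ between diagonal operators. Using \eqref{3}, choose $N\in\N$, $R>0$, $c_\pm>0$ so that, with $\zeta_+(\bx):=c_+(1+|\bx|^2)^{N/2}e^{-\gamma|\bx|^{2\beta}}$ and $\zeta_-(\bx):=c_-\,\one_{\{|\bx|>R\}}(\bx)\,(1+|\bx|^2)^{-N/2}e^{-\gamma|\bx|^{2\beta}}$, one has $\zeta_-\le\zeta\le\zeta_+$ on $\rd$. Since $\zeta_\pm$ are radial, the explicit formula for $\varphi_{k,r}$ (the Remark after \eqref{j27}), whose angular factor is $e^{\I(k-r)\theta}$, shows that $p_r\,\zeta_\pm\,p_r$ are diagonal in the basis $\{\varphi_{k,r}\}$, with eigenvalues $a^\pm_k=\int_\rd\zeta_\pm(\bx)\,|\varphi_{k,r}(\bx)|^2\,d\bx$; hence $a^-_k\le s_k\le a^+_k$. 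Passing to polar coordinates and substituting $t=b|\bx|^2/2$, and writing $\mu:=\gamma(2/b)^\beta$, one computes
\[
\int_\rd e^{-\gamma|\bx|^{2\beta}}|\varphi_{k,r}(\bx)|^2\,d\bx=\frac{r!}{k!}\int_0^\infty t^{k-r}\,{\rm L}_r^{(k-r)}(t)^2\,e^{-t-\mu t^\beta}\,dt ,
\]
and similarly for $a^\pm_k$ with the extra weight $(1+2t/b)^{\pm N/2}$, plus (for $a^-_k$) the contribution of $\{|\bx|\le R\}$, which is $O(e_k)$ with $e_k:=\int_{\{|\bx|\le R\}}|\varphi_{k,r}|^2$. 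Near the scale $t=t_*(k)$ at which the integrand concentrates (see below), the factors $r!\,t^{-r}{\rm L}_r^{(k-r)}(t)^2$ and $(1+2t/b)^{\pm N/2}$ are of polynomial size in $k$ for fixed $r$, while $e_k=e^{-k\ln k+O(k)}$ is negligible against $s_k$ in each of the three regimes below (one compares $e^{-k\ln k}$ with a main term which is sub- or super-$k\ln k$ according to $\beta$). Consequently
\[
\ln s_k=\ln\!\left(\frac1{k!}\int_0^\infty t^k\,e^{-t-\mu t^\beta}\,dt\right)+O(\ln k)=:\ln J_k+O(\ln k),\qquad k\to\infty .
\]

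Finally I would evaluate $J_k=\frac1{k!}\int_0^\infty e^{\phi_k(t)}\,dt$ with $\phi_k(t)=k\ln t-t-\mu t^\beta$. The saddle $t_*=t_*(k)$ solves $k=t_*+\mu\beta\,t_*^{\beta}$, $\phi_k''(t_*)<0$, and by the Laplace method together with Stirling's formula $\ln k!=k\ln k-k+\tfrac12\ln(2\pi k)+O(k^{-1})$,
\[
\ln J_k=\phi_k(t_*)-\ln k!+\tfrac12\ln\!\frac{2\pi}{-\phi_k''(t_*)}+O(k^{-1})
\]
(one checks that the Gaussian width is $o(t_*)$ and the cubic remainder $O(k^{-1/2})$ in each regime). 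Expanding $t_*$ in decreasing powers of $k$: if $\beta\in(0,1)$, then $t_*=k-\mu\beta k^{\beta}+\dots$ and only finitely many terms of $\phi_k(t_*)-\ln k!$ are of positive order, giving $\ln J_k=-\sum_{1\le j<1/(1-\beta)}f_j\,k^{(\beta-1)j+1}+O(\ln k)$ with $f_1=\mu$; if $\beta=1$, then $J_k=(1+\mu)^{-(k+1)}$ exactly, so $\ln J_k=-k\ln(1+\mu)+O(1)$; and if $\beta\in(1,\infty)$, the balance $k\approx\mu\beta\,t_*^{\beta}$ gives $t_*=(k/(\mu\beta))^{1/\beta}(1+\dots)$ and, combining with Stirling, $\ln J_k=-\tfrac{\beta-1}{\beta}k\ln k+\tfrac{\beta-1-\ln(\mu\beta)}{\beta}k-\sum_{1\le j<\beta/(\beta-1)}g_j\,k^{(\frac1\beta-1)j+1}+O(\ln k)$; in both cases the sums terminate at exactly the exponents that remain positive, the rest being absorbed in $O(\ln k)$. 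Feeding these into the reduction above yields \eqref{4}, \eqref{5} and \eqref{6}.

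The soft reduction to $p_r\,\zeta\,p_r$ (Section~\ref{s3} and Proposition~\ref{np7}) is not the difficulty; the hard part will be to control $\ln s_k$ with error only $O(\ln k)$, precisely where the standard pseudodifferential calculus is unavailable because $\zeta$ decays too fast. This demands arranging the radial bracketing so that $\zeta_+/\zeta_-$ grows only polynomially on the circle $|\bx|^2\asymp 2k/b$ and that the ball $\{|\bx|\le R\}$ is genuinely negligible in \emph{each} of the three decay regimes, and then pushing the saddle-point expansion of $J_k$ far enough, matched carefully against Stirling's formula, so that all sub-$O(\ln k)$ contributions cancel and the finite sums in \eqref{4} and \eqref{6} truncate at exactly the stated indices. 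This is the step for which we adapt and develop the techniques of \cite{RaWa02, FiPu06, LuRa15}.
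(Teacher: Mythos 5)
Your proposal is essentially correct and follows the same strategy as the paper's proof: reduce to the Toeplitz operator $p_r\,\zeta\,p_r$ via the unitary equivalences of Section~\ref{s3} and Proposition~\ref{np7}, bracket by radial comparison symbols, and obtain the eigenvalue asymptotics by a Laplace/saddle-point computation of one-dimensional integrals. The paper packages the last two steps as ``arguing as in \cite[Theorem 2.2]{LuRa15}'' and a citation of \cite[Lemma 5.3]{LuRa15}; you redo the saddle-point explicitly (correctly), including the clean exact formula $J_k=(1+\mu)^{-(k+1)}$ when $\beta=1$.

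One genuine difference worth noting: the paper brackets the \emph{eigenvalues} $\nu_k(p_r\,\zeta\,p_r)$ by $\nu_k(p_0\,\chi\,p_0)$, i.e.\ it passes to the zeroth Landau level, so the resulting diagonal entries are $\frac{1}{k!}\int_\rho^\infty t^{k+\delta/2}e^{-t-\mu t^\beta}\,dt$ \emph{without} any Laguerre factor. You instead keep $p_r$ throughout and bracket the \emph{symbol} $\zeta_-\le\zeta\le\zeta_+$; as a result your diagonal entries carry the factor $r!\,t^{-r}\,{\rm L}_r^{(k-r)}(t)^2$, which you dispose of by asserting it is ``of polynomial size in $k$'' near the saddle $t_*$. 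That assertion is true in the relevant integrated sense, but as stated it is a bit loose: in the regime $\beta\le1$ one has $t_*\sim k$, and ${\rm L}_r^{(k-r)}$ has all its zeros on that very scale, so the \emph{pointwise} value at $t_*$ may be anomalously small. What one actually needs is a two-sided polynomial bound on the ratio
\[
\frac{r!\displaystyle\int_0^\infty t^{k-r}\,{\rm L}_r^{(k-r)}(t)^2\,e^{-t-\mu t^\beta}\,dt}{\displaystyle\int_0^\infty t^{k}\,e^{-t-\mu t^\beta}\,dt},
\]
which follows e.g.\ from Laguerre orthogonality $\int_0^\infty t^{\nu}\,{\rm L}_r^{(\nu)}(t)^2\,e^{-t}\,dt=\Gamma(\nu+r+1)/r!$ together with the fact that $e^{-\mu t^\beta}$ varies slowly across the Gaussian window of width $\sqrt{t_*}$, but this is precisely the technical point that \cite[Theorem~2.2 and Lemma~5.3]{LuRa15} were engineered to handle. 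If you keep your route you should state and prove this bound rather than invoke a pointwise claim; alternatively, mimic the paper and change the projection from $p_r$ to $p_0$ first, which eliminates the Laguerre factor entirely.
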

      The coefficients $f_j$ and $g_j$, $j \in \N$, appearing in \eqref{4} and \eqref{6}, are described explicitly in
      \cite[Theorem 2.2]{LuRa15}. For the completeness of the exposition, we reproduce this description here.
      Assume at first $\beta \in (0,1)$. For $s>0$ and $\epsilon \in \re$, $|\epsilon| \ll 1$, introduce the function
      $$
      F(s;\epsilon) : = s-\ln{s} + \epsilon \mu  s^\beta.
      $$
      Denote by $s_<(\epsilon)$ the unique positive solution of the equation $s = 1 - \epsilon \beta \mu s^\beta$, so that $\frac{\partial F}{\partial s}(s_<(\epsilon); \epsilon) = 0$. Set
      $$
      f(\epsilon) : = F(s_<(\epsilon); \epsilon).
      $$
      Note that $f$ is a real analytic function for small $|\epsilon|$. Then $f_j: = \frac{1}{j!} \frac{d^j f}{d\epsilon^j}(0)$, $j \in \N$. \\
      Let now $\beta \in (1,\infty)$. For $s>0$ and $\epsilon \in \re$, $|\epsilon| \ll 1$, introduce the function
      $$
      G(s;\epsilon) : = \mu s^\beta -\ln{s} + \epsilon s.
      $$
      Denote by $s_>(\epsilon)$ the unique positive solution of the equation $\beta \mu s^\beta = 1 - \epsilon s$ so that $\frac{\partial G}{\partial s}(s_>(\epsilon); \epsilon) = 0$. Define
      $$
      g(\epsilon) : = G(s_>(\epsilon); \epsilon),
      $$
      which is a real analytic function for small $|\epsilon|$. Then $g_j: = \frac{1}{j!} \frac{d^j g}{d\epsilon^j}(0)$, $j \in \N$.\\

      In our next theorem we deal with the case where $v_{b,q}$ admits a power-like decay at infinity. Our general assumption concerning the perturbation $\opwv$ is: \\

      $\bf{H_3}$ The symbol $\cV$ is real-valued and satisfies the hypotheses of Proposition \ref{np90}.\\

      We recall that under Assumption $\bf{H_3}$ the operator $\opwv$ is self-adjoint and bounded in $L^2(\rd)$, and $\opwv H_0^{-1}$ is compact. However, we do not suppose now that $\opwv$ has a definite sign.
      Further, under Assumption $\bf{H_3}$, there exists a symbol $\cW \in \cS_\varrho^{0}(\re^4)$ such that $\opwv^2 = \opw(\cW)$.
    By analogy with \eqref{j19}, set
    $$
    w_{b,q}(y,\eta) : = \int_{\rd} (\cW \circ \kappa_b)(x,\xi,y,\eta) \Psi_q(x,\xi) dx d\xi,  \quad (y,\eta) \in \rd, \quad q \in \Z_+.
    $$
    Our next assumption concerns the decay of the symbols $v_{b,q}$ and $w_{b,q}$ at infinity:\\

    $\bf{H_{4,q,\gamma}}$ Let $q \in \Z_+$. Then there exist $\gamma > 0$ and $\varrho \in (0,1]$ such that $v_{b,q} \in \cS_\varrho^{-\gamma}(\rd)$ and $w_{b,q} \in \cS_\varrho^{-2\gamma}(\rd)$.\\

    {\em Remark}: A simple sufficient condition which guarantees the fulfillment of $\bf{H_3}$ and $\bf{H_{4,q,\gamma}}$ is that $\cV \in \cS_\varrho^{-\gamma}(\re^4)$ with some $\gamma > 0$ and $\varrho \in (0,1]$. In this case, the operator $\opwv$ is not only bounded but also compact.
    Another condition which implies the validity of $\bf{H_3}$ and $\bf{H_{4,q,\gamma}}$ is that $\opw(V) = V$ is a local potential, and $V \in \cS_\varrho^{-\gamma}(\re^2)$. This case corresponds to an electric perturbation of $H_0$ and was considered in \cite{Ra90, Iv98}.\\

       It is more convenient to formulate Theorem \ref{oth1} below  in the terms of
     eigenvalue counting functions. For
     $S = S^* \in \gB(L^2(\rd))$, $S H_0^{-1} \in \gS_\infty(L^2(\rd))$,
     set
    $$
    \cN_q^>(\lambda; S) : = N_{(\Lambda_q+\lambda, \Lambda_q + b)}(H_0+S), \quad \lambda \in (0,b), \quad q \in \Z_+,
    $$
    $$
    \cN_q^<(\lambda; S) : = N_{(\Lambda_{q-1}+b,  \Lambda_q -\lambda)}(H_0+S), \quad \lambda \in (0,b), \quad q \in \N,
    $$
    $$
    \cN_0^<(\lambda; S) : = N_{(-\infty,  \Lambda_q -\lambda)}(H_0+S), \quad \lambda >0.
    $$

  Let $f : (0,\infty) \to [0,\infty)$ be a non-increasing function. We will say that $f$ satisfies the condition $\cC$ if there exists $\lambda_0 \in (0,\infty)$ such that:
    \begin{itemize}
    \item $f$ is derivable on $(0,\lambda_0)$;
    \item there exist numbers $0 < \gamma_1 < \gamma_2<\infty$ such that for any $\lambda \in (0,\lambda_0)$ we
    have
    \bel{o10}
    \gamma_1 f(\lambda) < - \lambda f'(\lambda) < \gamma_2 f(\lambda).
    \ee
    \end{itemize}
    Let $n \in \N$. For a Lebesgue-measurable function $\cF : \re^{2n} \to \re$ set
    $$
    \gV_n^\pm(\lambda; \cF) : = (2\pi)^{-n} \left|\left\{ (x,\xi) \in \re^{2n} \, | \, \pm \cF(x,\xi) > \lambda\right\}\right|, \quad \lambda > 0,
    $$
    where $| \cdot |$ is the Lebesgue measure in $\re^{2n}$.

    \begin{theorem} \label{oth1}
     Assume that $\cV$ satisfies $\bf{H_3}$ and $\bf{H_{4,q,\gamma}}$ with $q \in \Z_+$ and $\gamma >0$. Assume that the functions $\gV_1^\pm(\cdot;v_{b,q})$, $v_{b,q}$ being defined in \eqref{j19}, satisfy the condition $\cC$. If $\liminf_{\lambda \downarrow 0} \lambda^{2/\gamma} \gV_1^+(\lambda; v_{b,q}) > 0$ {\rm (}resp., if $\liminf_{\lambda \downarrow 0} \lambda^{2/\gamma} \gV_1^-(\lambda; v_{b,q}) > 0${\rm )}, then we have
    \bel{o11}
    \cN_q^>(\lambda; \opw(\cV)) = \gV_1^+(\lambda; v_{b,q}) (1 + o(1)), \quad \lambda \downarrow 0,
    \ee
    or, respectively,
    \bel{o12}
    \cN_q^<(\lambda; \opw(\cV)) = \gV_1^-(\lambda; v_{b,q}) (1 + o(1)), \quad \lambda \downarrow 0.
    \ee
    \end{theorem}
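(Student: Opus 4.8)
The plan is to reduce the counting of eigenvalues of $H_\cV = H_0 + \opwv$ near the Landau level $\Lambda_q$ to the spectral asymptotics of the effective Toeplitz operator $\tqwv = p_q\opwv p_q$ and, via Corollary \ref{nf1}, to those of the Weyl $\Psi$DO $\opw(v_{b,q})$ in $L^2(\re)$, for which a large-scale Weyl law holds. Set $Q_q := I - p_q$. By Proposition \ref{np90} and Assumption $\bf{H_3}$ the operator $\opwv H_0^{-1}$ is compact, and since $Q_q$ commutes with $H_0$, so is $Q_q\opwv Q_q H_0^{-1}$; as the spectrum of $H_0|_{{\rm Ran}\,Q_q}$ is $\{\Lambda_j\}_{j\neq q}$, lying at distance $\geq 2b$ from $\Lambda_q$, the discrete spectrum of $Q_q H_\cV Q_q$ in ${\rm Ran}\,Q_q$ cannot accumulate at $\Lambda_q$. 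Hence, possibly after removing a fixed finite-dimensional eigenspace (which changes every counting function below by only $O(1)$), I would fix $\delta \in (0,b)$ so small that $(Q_q H_\cV Q_q - E)$ is boundedly invertible in ${\rm Ran}\,Q_q$, with $\|(Q_q H_\cV Q_q - E)^{-1}\|_{{\rm Ran}\,Q_q} \leq C_\delta$ uniformly for $E \in (\Lambda_q - \delta, \Lambda_q + \delta)$, and then invoke the Feshbach--Grushin reduction: for such $E$, $E$ is a discrete eigenvalue of $H_\cV$, with multiplicity, if and only if $E - \Lambda_q$ is a discrete eigenvalue, with the same multiplicity, of the self-adjoint operator
\bel{pp1}
T(E) := p_q\opwv p_q - p_q\opwv Q_q\,(Q_q H_\cV Q_q - E)^{-1}\,Q_q\opwv p_q
\ee
in ${\rm Ran}\,p_q$. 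Since the discrete spectrum of $H_\cV$ in $\overline{(\Lambda_q + \delta, \Lambda_q + b)}$, respectively in $\overline{(\Lambda_{q-1}+b, \Lambda_q - \delta)}$ (or $(-\infty, \Lambda_0 - \delta]$ when $q = 0$), is finite, a soft monotonicity argument in the spirit of \cite{Ra90}, using the norm-continuity of $E \mapsto T(E)$, the convergence $T(E) \to T_q := T(\Lambda_q)$ as $E \to \Lambda_q$, and the linear growth of $E - \Lambda_q$, would give
$$
\cN_q^>(\lambda;\opwv) = N_{(\lambda,\infty)}(T_q)\,(1+o(1)), \qquad \cN_q^<(\lambda;\opwv) = N_{(-\infty,-\lambda)}(T_q)\,(1+o(1)), \qquad \lambda \downarrow 0,
$$
once the relevant counting functions of $T_q$ are known to obey a condition of type $\cC$, which will follow from the asymptotics established in the last step.

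The next step is to show that the second term $R_q := p_q\opwv Q_q(Q_q H_\cV Q_q - \Lambda_q)^{-1}Q_q\opwv p_q$ of $T_q$ is negligible. Since $Q_q$ is an orthogonal projection and $\opwv$ is self-adjoint, $0 \leq \opwv Q_q\opwv = \opwv^2 - (\opwv p_q)(\opwv p_q)^* \leq \opwv^2 = \opw(\cW)$, with $\cW \in \cS_\varrho^0(\re^4)$ the symbol provided by $\bf{H_3}$, so that $0 \leq p_q\opwv Q_q\opwv p_q \leq p_q\opw(\cW)p_q$; by Corollary \ref{nf1} applied to $\cW$, the operator on the right is unitarily equivalent in $L^2(\re)$ to $\opw(w_{b,q})$. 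Using $\|(Q_q H_\cV Q_q - \Lambda_q)^{-1}\|_{{\rm Ran}\,Q_q} \leq C_\delta$ and the elementary inequality $-C_\delta B^*B \leq B^*AB \leq C_\delta B^*B$ (valid for self-adjoint $A$ with $\|A\| \leq C_\delta$) with $A = (Q_q H_\cV Q_q - \Lambda_q)^{-1}$ and $B = Q_q\opwv p_q$, one gets $\pm R_q \leq C_\delta\, p_q\opwv Q_q\opwv p_q$, whence for each $\epsilon > 0$
$$
N_{(\epsilon\lambda,\infty)}(\pm R_q) \leq N_{(\epsilon\lambda/C_\delta,\infty)}\big(\opw(w_{b,q})\big) = O\big(\lambda^{-1/\gamma-\epsilon'}\big), \qquad \lambda \downarrow 0,
$$
for arbitrarily small $\epsilon' > 0$ — a crude Schatten bound, valid since $w_{b,q} \in \cS_\varrho^{-2\gamma}(\rd)$ by $\bf{H_{4,q,\gamma}}$ implies $\opw(w_{b,q}) \in \gS_s(L^2(\re))$ for every $s > 1/\gamma$. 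By the hypothesis $\liminf_{\lambda\downarrow0}\lambda^{2/\gamma}\gV_1^\pm(\lambda; v_{b,q}) > 0$ this is $o\big(\gV_1^\pm(\lambda; v_{b,q})\big)$, so the Weyl inequalities for the counting function together with condition $\cC$ for $\gV_1^\pm(\cdot; v_{b,q})$ would yield $N_{(\lambda,\infty)}(T_q) = N_{(\lambda,\infty)}(p_q\opwv p_q)(1+o(1))$ and $N_{(-\infty,-\lambda)}(T_q) = N_{(-\infty,-\lambda)}(p_q\opwv p_q)(1+o(1))$.

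The final step is the large-scale Weyl law for $p_q\opwv p_q = \tqwv$, which by Corollary \ref{nf1} is unitarily equivalent in $L^2(\re)$ to $\opw(v_{b,q})$: one needs
$$
N_{(\lambda,\infty)}\big(\opw(v_{b,q})\big) = \gV_1^+(\lambda; v_{b,q})(1+o(1)), \qquad N_{(-\infty,-\lambda)}\big(\opw(v_{b,q})\big) = \gV_1^-(\lambda; v_{b,q})(1+o(1)), \qquad \lambda\downarrow0,
$$
for $v_{b,q} \in \cS_\varrho^{-\gamma}(\rd)$ with $\gV_1^\pm(\cdot; v_{b,q})$ satisfying condition $\cC$. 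I would obtain this by extending the arguments of \cite{Ra90} and \cite{DaRo87}: a dyadic decomposition in $\lambda$ and a phase-space partition of unity on the (by $\cC$, slowly varying) scale on which $v_{b,q}$ crosses the level $\pm\lambda$, a comparison of $\opw(v_{b,q})$ with a direct sum of localized model operators whose eigenvalue counts reproduce the volumes $\gV_1^\pm(\lambda; v_{b,q})$, and control of the errors, which live on the transition layers $\{(y,\eta): |v_{b,q}(y,\eta)\mp\lambda| < \epsilon\lambda\}$ of volume $o(\gV_1^\pm(\lambda; v_{b,q}))$ by condition $\cC$. Substituting back into the first two steps gives \eqref{o11} and \eqref{o12}. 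I expect the main obstacle to be precisely this last step — proving the Weyl law for a non-homogeneous $\Psi$DO symbol under the mere "$O$-regular variation" hypothesis $\cC$ rather than exact homogeneity — together with the uniformity of the resolvent bounds entering the Feshbach reduction \eqref{pp1} as $E \to \Lambda_q$.
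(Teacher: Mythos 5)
Your overall architecture coincides with the paper's: reduce the counting of eigenvalues of $H_\cV$ near $\Lambda_q$ to the Toeplitz operator $p_q\opwv p_q$, pass to $\opw(v_{b,q})$ via Corollary \ref{nf1}, control the coupling to the other Landau levels through $p_q\opwv^2 p_q = p_q\opw(\cW)p_q$, which is unitarily equivalent to $\opw(w_{b,q})$, and conclude with a Weyl law under condition $\cC$. The reduction itself is carried out differently. The paper sandwiches $\cN_q^>(\lambda;\opwv)$ and $\cN_q^<(\lambda;\opwv)$ between counting functions of $p_q(\opwv\pm\veps|T|)p_q$ (Proposition \ref{op1}, whose proof is omitted along the lines of \cite[Section 5]{Ra90}) and then, via the Weyl inequalities, reaches Corollary \ref{nf600}, where the remainder is $n_+(C_1\lambda^2\eta^2\veps^{-2};\opw(w_{b,q}))$; since this is of the same order $\lambda^{-2/\gamma}$ as the guaranteed lower bound on the main term, the proof must end with the choice $\eta=\sqrt{\veps}$ and $\veps\downarrow 0$. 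You instead use a Feshbach/Schur-complement family $T(E)$, and your operator inequality $\pm R_q\leq C_\delta\, p_q\opwv Q_q\opwv p_q\leq C_\delta\, p_q\opw(\cW)p_q$ is correct; because your error threshold is then linear in $\lambda$, the remainder is $O(\lambda^{-1/\gamma-\veps'})=o(\gV_1^\pm(\lambda;v_{b,q}))$ for each fixed cut-off, which actually avoids the paper's final limiting procedure. The ``soft monotonicity'' step converting the $E$-dependent family into an asymptotic identity of counting functions is left at the same level of detail as the paper's own omitted proof of Proposition \ref{op1} (both defer to \cite{Ra90}-type arguments), so, granting the finite-rank modification you mention to keep $Q_q(H_\cV-E)Q_q$ uniformly invertible near $\Lambda_q$, I see no objection to that route.

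The one genuine shortfall is the final step. The asymptotics $n_\pm(\lambda;\opw(v_{b,q}))=\gV_1^\pm(\lambda;v_{b,q})(1+o(1))$ is exactly the paper's Proposition \ref{op2}, and the paper does not re-derive it: it follows directly from the main theorem of \cite{DaRo87}, applied with $\varphi=\phi=(1+|x|^2+|\xi|^2)^{\varrho/2}$ and $m=(1+|x|^2+|\xi|^2)^{-\gamma/2}$; condition $\cC$ together with $\liminf_{\lambda\downarrow 0}\lambda^{2/\gamma}\gV_1^\pm(\lambda;v_{b,q})>0$ is precisely what that theorem needs, and it even yields a remainder $\cO(\lambda^\delta)$. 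Your plan to re-prove this Weyl law by dyadic decomposition and phase-space partitions of unity under the bare hypothesis $\cC$ is left as a programme — you flag it yourself as the main obstacle — and as written it is the missing piece of the proof; the repair is simply to quote Dauge--Robert as the paper does. Similarly, your Schatten bound $\opw(w_{b,q})\in\gS_s(L^2(\re))$ for every $s>1/\gamma$ needs a justification or a reference (the paper asserts instead the bound $n_+(s;\opw(w_{b,q}))\leq C_2 s^{-1/\gamma}$ of \eqref{o17}, again obtainable from \cite{DaRo87}-type estimates), but either version suffices, since only an error $o(\lambda^{-2/\gamma})$ is required.
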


    {\em Remark}: It is easy to show that there exists $\delta>0$ such that we can replace $o(1)$ by $\cO(\lambda^\delta)$ in the remainder estimates in \eqref{o11} - \eqref{o12}. Since anyway these remainder estimates would not be sharp, we omit the tedious technical details.

     \subsection{Proofs of Theorems \ref{nth1} and \ref{nth2}}
    \label{ss42}
    \subsubsection{Auxiliary results}
    Let $T=T^*$ be a compact operator in a Hilbert space $X$. For $s>0$ set
    $$
    n_\pm(s;T) : = N_{(s,\infty)}(\pm T).
    $$
    If ${\rm rank}\,T_+ = \infty$ which is equivalent to $\lim_{s \downarrow 0} n_+(s;T) = \infty$, denote by $\left\{\nu_k(T)\right\}_{k=0}^{\infty}$ the non-increasing sequence of the positive eigenvalues of $T$.
    If $T_j = T_j^* \in \gS_\infty(X)$ and $s_j > 0$, $j=1,2$, then {\em the Weyl inequalities}
    \bel{o30}
    n_\pm(s_1 + s_2; T_1 + T_2) \leq n_\pm(s_1;T_1) + n_\pm(s_2;T_2)
    \ee
    hold true  (see e.g. \cite[Theorem 9, Section 9.2]{BiSo87}).
    \begin{proposition} \label{np7}
     Suppose that $\bf{H_1}$ and $\bf{H_{2,q,r}}$ with $q, r \in \Z_+$, hold true and  $\zeta$ satisfies the assumptions of Theorem {\rm \ref{nth1}} or of Theorem {\rm \ref{nth2}}. Then ${\rm rank}\,(p_r \zeta p_r) = \infty$, $m_q^\pm = \infty$, and  for each $\veps \in (0,1)$ there exists $k_0 \in \Z_+$ such that for sufficiently large $k \in \N$ we have
    \bel{n60a}
   \frac{1}{1+\veps} \nu_{k+k_0}(p_r \zeta p_r)  \leq
   \pm (\lambda_{k,q}^\pm(\cV) - \Lambda_q) \leq
    \frac{1}{1-\veps} \nu_{k-k_0}(p_r \zeta p_r).
    \ee
    \epr
    \begin{proof}
    By the generalized Birman-Schwinger principle (see e.g. \cite[Theorem 1.3]{AlAvDeHe94}, \cite[Proposition 1.6]{Bi91}),
    \bel{n61a}
    \cN_q(\lambda; \pm \opwv) = n_\mp(1; \opwv^{1/2} (H_0 - \Lambda_q \mp \lambda)^{-1} \opwv^{1/2}) + \cO(1).
    \ee
    Writing
    $$
    (H_0 - \Lambda_q \mp \lambda)^{-1} = \mp \lambda^{-1} p_q + (I-p_q)(H_0 - \Lambda_q \mp \lambda)^{-1},
    $$
    bearing in mind that the operator $(I-p_q)(H_0 - \Lambda_q \mp \lambda)^{-1}$ admits a uniform limit as $\lambda \downarrow 0$, and applying the Weyl inequalities \eqref{o30}, we easily find that  for each $\veps \in (0,1)$ we have
    $$
    n_+((1+\veps)\lambda; \opwv^{1/2} p_q \opwv^{1/2}) + \cO_{\veps, q}(1) \leq
    $$
    $$
    n_\mp(1; \opwv^{1/2} (H_0 - \Lambda_q \mp \lambda)^{-1} \opwv^{1/2}) \leq
    $$
    \bel{n62}
    n_+((1-\veps)\lambda; \opwv^{1/2} p_q \opwv^{1/2}) + \cO_{\veps, q}(1),
    \ee
    as $\lambda \downarrow 0$. Further, by Corollaries \ref{nf1}, \ref{nf10} and \ref{df1}, we have
    $$
    n_+(s; \opwv^{1/2} p_q \opwv^{1/2}) = n_+(s; p_q \opwv p_q) =
    $$
    \bel{n63}
    n_+(s; \opw(v_{b,q})) = n_+(s;\opaw(\tvbq)) = n_+(s; p_0 \obq p_0) = n_+(s; p_r \zeta p_r), \quad s>0.
    \ee
    Putting together \eqref{n61a}, \eqref{n62}, and \eqref{n63}, we get
    \bel{fin31}
    n_+((1+\veps)\lambda;p_r \, \zeta \, p_r) + \cO_{\veps, q}(1) \leq \cN_q(\lambda; \pm \opwv) \leq  n_+((1-\veps)\lambda;p_r \, \zeta \,p_r) + \cO_{\veps, q}(1).
    \ee
    By \cite{RaWa02}, $n_+(\lambda; p_r \zeta p_r)$ tends to infinity as $\lambda \downarrow 0$ which implies ${\rm rank}\,(p_r \zeta p_r) = \infty$. By \eqref{fin31}, the counting function $\cN_q(\lambda; \pm \opwv)$ also  tends to infinity as $\lambda \downarrow 0$, and hence $m_q^\pm = \infty$. Finally,
    estimate \eqref{n60a} follows easily from \eqref{fin31}.
    \end{proof}
    Let $\Gamma \subset \rd$ be a Jordan curve, i.e. a simple closed curve. We will call it $C^2$-smooth if there exists a $C^2$-smooth diffeomorphism ${\bf x} : {\mathbb S}^1 \to \Gamma$.
    \begin{proposition} \label{pfin1}
    Let $\Omega \subset \rd$ be a bounded domain. Then there exists a sequence of  $C^2$-smooth Jordan curves $\Gamma_j \subset \Omega$ such that
    \bel{fin51}
    \lim_{j \to \infty} \gC(\Gamma_j) = \gC(\overline{\Omega}).
    \ee
    \end{proposition}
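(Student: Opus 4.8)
The plan is to construct the curves $\Gamma_j$ directly. Since $\Gamma_j\subset\Omega\subset\overline\Omega$, monotonicity of the logarithmic capacity gives $\gC(\Gamma_j)\le\gC(\overline\Omega)$ for free, so it suffices to arrange $\gC(\Gamma_j)\to\gC(\overline\Omega)$. Observe that, $\Omega$ being a nonempty open set, it contains a disc, whence $\gC(\overline\Omega)>0$; and being a domain, $\Omega$ is connected. First I would fix an exhaustion $\Omega_n\uparrow\Omega$ by bounded \emph{connected} open sets with $\overline{\Omega_n}\subset\Omega$ compact (for instance $\Omega_n$ = the connected component, containing a fixed base point $z_0\in\Omega$, of $\{z\in\Omega:\mathrm{dist}(z,\partial\Omega)>1/n\}\cap\{|z|<n\}$, discarding the finitely many $n$ for which this is empty or does not contain $z_0$).

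The key point is that $\gC(\overline{\Omega_n})\to\gC(\overline\Omega)$. Here $\gC(\overline{\Omega_n})\le\gC(\overline\Omega)$ is monotonicity, and for the reverse inequality I would sweep the equilibrium measure $\mu$ of $\overline\Omega$ — a probability measure supported on the outer boundary $\partial_e\overline\Omega\subset\partial\Omega$ with logarithmic energy $\mathcal I(\mu)=-\log\gC(\overline\Omega)$ — onto the compact set $\overline{\Omega_n}$. The balayage $\mu_n$ is a probability measure carried by $\overline{\Omega_n}\subset\Omega$, and a standard computation gives $\mathcal I(\mu_n)\le\mathcal I(\mu)+2c_n$, where $c_n\ge 0$ is the balayage constant; since $c_n=\int g_{W_n}(\cdot,\infty)\,d\mu$ with $W_n$ the unbounded component of $\C_\infty\setminus\overline{\Omega_n}$, and $\partial_e\overline\Omega$ is a \emph{fixed} compact subset of each $W_n$ on which $g_{W_n}(\cdot,\infty)$ decreases to $0$ (each of its points becoming a regular boundary point of $\bigcap_n W_n$), dominated convergence yields $c_n\to0$. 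Hence $\gC(\overline{\Omega_n})\ge e^{-\mathcal I(\mu_n)}\ge\gC(\overline\Omega)\,e^{-2c_n}\to\gC(\overline\Omega)$. I expect this to be the main obstacle: the equilibrium measure of $\overline\Omega$ is concentrated on $\partial\Omega$, so one genuinely has to transport it into $\Omega$ and control the resulting energy increase — equivalently, verify that no capacity is lost on shrinking $\overline\Omega$ to a compact subset of $\Omega$. (One may instead invoke the classical equality $\gC(\overline\Omega)=\sup\{\gC(K):K\subset\Omega\text{ compact}\}$ valid for open $\Omega$.)

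It remains, for each $n$, to replace $\overline{\Omega_n}$ by a $C^2$ Jordan curve inside $\Omega$ of almost the same capacity. Since $\Omega_n$ is connected, $\overline{\Omega_n}$ is a connected non-polar compact set, so $W_n$ has connected complement and is simply connected; let $\phi_n\colon W_n\to\{|w|>1\}$ be the Riemann map normalized so that $g_{W_n}(\cdot,\infty)=\log|\phi_n|$, whence $\phi_n(z)=z/\gC(\overline{\Omega_n})+\cO(1)$ as $z\to\infty$. For $\rho>1$ put $\Gamma_{n,\rho}:=\phi_n^{-1}(\{|w|=\rho\})$; being the image of a circle under a conformal map, it is a real-analytic (in particular $C^2$) Jordan curve, lying in $\C$ for $\rho$ close to $1$. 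It bounds the compact set $K_{n,\rho}:=\C_\infty\setminus\phi_n^{-1}(\{|w|>\rho\})\supset\overline{\Omega_n}$, whose exterior has Green's function $g_{W_n}(\cdot,\infty)-\log\rho$; reading off the Robin constant gives $\gC(K_{n,\rho})=\rho\,\gC(\overline{\Omega_n})$, and since the capacity of a compact set equals that of the boundary of the unbounded component of its complement, $\gC(\Gamma_{n,\rho})=\rho\,\gC(\overline{\Omega_n})$. Finally $\mathrm{dist}(\Gamma_{n,\rho},\partial W_n)\to0$ as $\rho\downarrow1$: otherwise a sequence $z_k\in\Gamma_{n,\rho_k}$ with $\rho_k\downarrow1$ bounded away from $\partial W_n$ would have a limit point $z^\ast\in W_n$ with $g_{W_n}(z^\ast,\infty)=\lim\log\rho_k=0$, which is impossible. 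As $\partial W_n\subset\overline{\Omega_n}\subset\Omega$ and $\mathrm{dist}(\overline{\Omega_n},\partial\Omega)>0$, for each $n$ I may choose $\rho_n\in(1,1+1/n)$ so small that $\Gamma_n:=\Gamma_{n,\rho_n}\subset\Omega$; then $\Gamma_n$ is a $C^2$ Jordan curve in $\Omega$ with $\gC(\Gamma_n)=\rho_n\,\gC(\overline{\Omega_n})\to\gC(\overline\Omega)$, which is \eqref{fin51}. The conformal bookkeeping in this last step is routine; the real work is in the preceding paragraph.
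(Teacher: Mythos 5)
Your proof takes a genuinely different route from the paper's, and the two halves of it fare differently.

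The second half — replacing a compact connected set by a nearby level curve of the Green's function of its unbounded complement, with capacity exactly $\rho\,\gC(\overline{\Omega_n})$ — is correct and elegant, and is a cleaner way to produce the smooth Jordan curve than what the paper does (the paper threads a Jordan curve through a near-Fekete point configuration). The first half, however, has a gap exactly at the step you flag as ``the main obstacle,'' namely proving $\gC(\overline{\Omega_n})\to\gC(\overline\Omega)$. Compute the balayage constant: with $\nu_n$ the equilibrium measure of $\overline{\Omega_n}$ and using $g_{W_n}(\cdot,\infty)=\mathcal I(\nu_n)-U^{\nu_n}$ on $W_n$ together with $\int U^{\nu_n}\,d\mu=\int U^{\mu}\,d\nu_n=\mathcal I(\mu)$ (since $U^{\mu}\equiv\mathcal I(\mu)$ q.e.\ on $\overline\Omega\supset\mathrm{supp}\,\nu_n$ and $\nu_n$ charges no polar set), one finds
$$
c_n=\int g_{W_n}(\cdot,\infty)\,d\mu=\mathcal I(\nu_n)-\mathcal I(\mu)=\log\frac{\gC(\overline\Omega)}{\gC(\overline{\Omega_n})}.
$$
So $c_n\to0$ \emph{is} the statement $\gC(\overline{\Omega_n})\to\gC(\overline\Omega)$, and feeding this $c_n$ back into $\mathcal I(\mu_n)\le\mathcal I(\mu)+2c_n$ yields only $\gC(\overline{\Omega_n})\le\gC(\overline\Omega)$, i.e.\ monotonicity. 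The dominated-convergence route to $c_n\to0$ rests on the pointwise claim $g_{W_n}(z,\infty)\to0$ for $\mu$-a.e.\ $z\in\partial_e\overline\Omega$, and that claim (equivalently, that no capacity is lost in the decreasing limit $W_n\downarrow$) is once more the statement to be proved; your appeal to ``regular boundary points of $\bigcap_n W_n$'' is not a proof, since $\bigcap_n W_n$ is not open and the Green's function of a decreasing sequence of domains need not converge to the Green's function of the intersection. Likewise, the parenthetical ``classical equality $\gC(\overline\Omega)=\sup\{\gC(K):K\subset\Omega\}$'' is not a textbook freebie for open sets; it is precisely the content of the proposition (minus the smoothing), and the paper in fact credits Ransford's private communication for the idea of proving it. The paper's device is Pommerenke's two-sided bound on the discriminant $\Delta_j(K)$ of a \emph{connected} compact in terms of $\gC(K)$: one picks $j$ near-extremal points in $\overline\Omega$, nudges them into $\Omega$, and runs Pommerenke's estimate the other way on a connected compact in $\Omega$ through those points, getting $\liminf_j\gC(\Gamma_j)\ge\gC(\overline\Omega)$ with explicit, vanishing error. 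That quantitative control on how fast the $j$-th diameter converges to capacity is exactly what your argument lacks, and it is also where connectivity of $\Omega$ enters essentially. To repair your proof, you should replace the balayage paragraph with an argument in the spirit of Pommerenke's estimate (or another genuine proof that $\gC$ is ``inner-regular through connected compacts'' for domains); the rest of your construction then goes through.
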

    \begin{proof}
    We follow closely the main idea suggested by \cite{Ran19}. Let $K \subset \rd$ be a compact set, and $j \geq 2$. Set
    $$
    \Delta_j(K) : = \max_{w_1,\ldots,w_j} \prod_{k,\ell, k \neq \ell} |w_k-w_\ell|.
    $$
    Then \cite[Satz 4]{Pom64} implies that if, in addition, $K$ is  connected, we have
    \bel{fin10}
    j^j \gC(K)^{j(j-1)} \leq \Delta_j(K) \leq (4e^{-1} \ln{j} + 4)^j j^j \, \gC(K)^{j(j-1)}.
    \ee
    By the left-hand inequality in \eqref{fin10}, there exist $w_1,\ldots,w_j$ in $\overline{\Omega}$ such that
    \bel{fin40}
    \prod_{k,\ell, k \neq \ell} |w_k-w_\ell| \geq j^j \gC(\overline{\Omega})^{j(j-1)}.
    \ee
    Choosing points $w'_k \in \Omega$ sufficiently close to $w_k$ we can find $w_1',\ldots,w_j' \in \Omega$ such that
     $$
    \prod_{k,\ell, k \neq \ell} |w'_k-w'_\ell| \geq \gC(\overline{\Omega})^{j(j-1)}.
    $$
    Then there exists a $C^2$-smooth Jordan curve
    \bel{fin50}
    \Gamma_j = \left\{{\bf x}(s) \in \Omega \, | \, s \in \sone\right\}
    \ee
    such that $w_k' \in \Gamma_j$, $k=1,\ldots,j$.  In order to see this, it suffices to connect the points $w_k'$, $k=1,\ldots, j$, by a piecewise-linear Jordan curve lying in $\Omega$, and then smooth out the corners. Since
    $\Gamma_j$ is compact and connected, \eqref{fin40} implies
    \bel{fin11}
    \Delta_j(\Gamma_j) \geq \gC(\overline{\Omega})^{j(j-1)}.
    \ee
    On the other hand, the right-hand inequality of \eqref{fin10} implies
    \bel{fin12}
    \Delta_j(\Gamma_j) \leq (4e^{-1} \ln{j} + 4)^j j^j \gC(\Gamma_j)^{j(j-1)}.
    \ee
    Combining \eqref{fin11} and \eqref{fin12}, we get
    $$
    \gC(\overline{\Omega}) \leq (4e^{-1} \ln{j} + 4)^{1/(j-1)} j^{1/(j-1)} \gC(\Gamma_j),
    $$
    and, therefore,
    \bel{fin13}
    \liminf_{j \to \infty} \gC(\Gamma_j) \geq \gC(\overline{\Omega}).
    \ee
    Since $\Gamma_j \subset \overline{\Omega}$, we have
    $\gC(\Gamma_j) \leq \gC(\overline{\Omega})$
    which together with \eqref{fin13} yields \eqref{fin51}.
    \end{proof}
\begin{corollary} \label{corfin1}
    Let $\Omega \subset \rd$ be a bounded domain. Then there exists a sequence of domains $\Omega_j \subset \rd$ with Lipschitz boundaries $\partial \Omega_j$, such that $\overline{\Omega}_j \subset \Omega$ and
    \bel{fin8}
    \lim_{j \to \infty} \gC(\overline{\Omega}_j) = \gC(\overline{\Omega}).
    \ee
    \end{corollary}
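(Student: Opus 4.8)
The plan is to deduce the corollary directly from Proposition \ref{pfin1} by thickening the Jordan curves it produces. First I would invoke Proposition \ref{pfin1} to obtain $C^2$-smooth Jordan curves $\Gamma_j \subset \Omega$ with $\gC(\Gamma_j) \to \gC(\overline{\Omega})$ as $j \to \infty$. Since each $\Gamma_j$ is compact and $\Omega$ is open, the distance $\delta_j := \mathrm{dist}(\Gamma_j, \rd \setminus \Omega)$ is strictly positive, and the idea is to let $\Omega_j$ be a sufficiently thin open tubular neighborhood $\{\mathbf{x} \in \rd : \mathrm{dist}(\mathbf{x}, \Gamma_j) < \veps_j\}$ of $\Gamma_j$, with $0 < \veps_j < \delta_j$.

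Next I would check that, for $\veps_j$ small enough, $\Omega_j$ has the required properties. Writing $\Gamma_j$ as the image of a $C^2$ diffeomorphism $\mathbf{x}_j : \sone \to \Gamma_j$ with unit normal $\mathbf{n}_j$, the tubular neighborhood theorem provides $\rho_j > 0$ such that $(s,t) \mapsto \mathbf{x}_j(s) + t\,\mathbf{n}_j(s)$ is a $C^1$ diffeomorphism of $\sone \times (-\rho_j, \rho_j)$ onto an open neighborhood of $\Gamma_j$; shrinking $\veps_j$ below $\min\{\rho_j, \delta_j\}$ we get that $\Omega_j$ is a bounded domain (open; bounded, since it lies in a $\delta_j$-neighborhood of the bounded set $\Gamma_j$; connected, as an open neighborhood of the connected set $\Gamma_j$), that $\partial \Omega_j$ is the union of the two $C^1$ parallel curves $s \mapsto \mathbf{x}_j(s) \pm \veps_j \mathbf{n}_j(s)$ and is hence Lipschitz, and that $\overline{\Omega}_j = \{\mathbf{x} : \mathrm{dist}(\mathbf{x}, \Gamma_j) \leq \veps_j\} \subset \Omega$ because $\veps_j < \delta_j$.

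Finally, to establish \eqref{fin8} no fine estimate on $\gC(\overline{\Omega}_j)$ is needed: from the inclusions $\Gamma_j \subset \Omega_j \subset \overline{\Omega}_j \subset \Omega \subset \overline{\Omega}$ and the monotonicity of the logarithmic capacity one gets $\gC(\Gamma_j) \leq \gC(\overline{\Omega}_j) \leq \gC(\overline{\Omega})$, and since $\gC(\Gamma_j) \to \gC(\overline{\Omega})$ the squeeze theorem forces $\gC(\overline{\Omega}_j) \to \gC(\overline{\Omega})$. The only mildly delicate point is the verification that a thin tubular neighborhood of a $C^2$ Jordan curve is a bounded Lipschitz domain with closure contained in $\Omega$; this is a routine application of the tubular neighborhood theorem together with the positivity of $\delta_j$, so I do not anticipate a genuine obstacle.
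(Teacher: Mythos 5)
Your proposal is correct and follows essentially the same route as the paper: both take the $C^2$-smooth Jordan curves $\Gamma_j\subset\Omega$ from Proposition \ref{pfin1}, thicken them into thin tubular neighborhoods $\Omega_j$ with Lipschitz boundary and $\overline{\Omega}_j\subset\Omega$, and conclude \eqref{fin8} by squeezing $\gC(\Gamma_j)\le\gC(\overline{\Omega}_j)\le\gC(\overline{\Omega})$ via monotonicity of the logarithmic capacity. The paper writes $\Omega_j$ directly via the normal parametrization $\mathbf{x}(s)+t\,\mathbf{n}(s)$ with $|t|<\veps_j$, which for $\veps_j$ small enough coincides with your distance tube; the content is the same.
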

    \begin{proof}
    Let $\Gamma_j$ be the $C^2$-smooth Jordan curve introduced in \eqref{fin50}, and let ${\bf n}(s) : = \frac{(x_2'(s), -x_1'(s))}{|{\bf x}'(s)|}$, $s \in \sone$,
    be a normal unit vector to $\Gamma_j$. Set
    $$
    \Omega_j : = \left\{{\bf x}(s) + t {\bf n}(s) \, | \, s \in \sone, \; |t| < \varepsilon_j\right\}
    $$
    where $\varepsilon_j > 0$ is so small that $\overline{\Omega}_j \subset \Omega$ and $\partial \Omega_j$ is Lipschitz-smooth. Since
    $\Gamma_j \subset \overline{\Omega}_j  \subset \overline{\Omega}$,
    \eqref{fin8} follows from \eqref{fin51}.
    \end{proof}
    \begin{proposition} \label{pfin2} \cite[Lemma 2]{FiPu06}
    Let $\Omega \subset \rd$ be a bounded domain with Lipschitz boundary. Fix $q \in \Z_+$. Then ${\rm rank}\,p_q \, \one_{\overline{\Omega}} \, p_q = \infty$ and we have
    \bel{fin9}
   \lim_{k \to \infty} \nu_k(p_q \, \one_{\overline{\Omega}} \, p_q) = -k \ln{k} + \left(1+ \ln{\left(\frac{b \gC(\overline{\Omega})^2}{2}\right)}\right)k + o(k), \quad k \to \infty.
    \ee
    \end{proposition}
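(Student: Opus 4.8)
The plan is to follow the strategy of \cite{FiPu06}: transport the question to the Fock--Bargmann model, reduce the spectral problem for $p_q\,\one_{\overline{\Omega}}\,p_q$ to a weighted polynomial extremal problem, and then invoke classical potential theory to bring in $\gC(\overline{\Omega})$. I will carry out the argument for $q=0$, which contains all the ideas; the general $q$ is treated by the same scheme, with the canonical basis $\{\varphi_{j,q}\}$ of ${\rm Ran}\,p_q$ playing the role of the monomials used below, and with the two leading asymptotic terms coming out independent of $q$ precisely because only the logarithmic capacity of $\overline{\Omega}$ enters (one may also reduce to $q=0$ directly, using the explicit Landau-level kernel $p_q(\bx,\by)={\rm L}_q(\tfrac{b}{2}|\bx-\by|^{2})\,p_0(\bx,\by)$, ${\rm L}_q$ the Laguerre polynomial \eqref{Dj3}, which on the bounded set $\overline{\Omega}$ differs from $p_0$ by a bounded factor).

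For $q=0$, the map $u=e^{-\varphi}g\mapsto g$ identifies ${\rm Ran}\,p_0={\rm Ker}\,a$ with the Fock--Bargmann space $\cF$ of entire functions with $\|g\|^{2}:=\int_{\C}|g|^{2}e^{-b|z|^{2}/2}\,dA<\infty$; there the monomials are orthogonal with $\|z^{j}\|^{2}=c_j:=\pi(2/b)^{j+1}j!$, the reproducing kernel satisfies $K(z,z)=\tfrac{b}{2\pi}e^{b|z|^{2}/2}$, and, since multiplication operators commute with this unitary, $p_0\,\one_{\overline{\Omega}}\,p_0$ becomes the Toeplitz operator $T=S^{*}S$, where $S\colon\cF\to L^{2}(\overline{\Omega},e^{-b|z|^{2}/2}dA)$ is the restriction map. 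As $S$ is injective on polynomials it has infinite rank, so ${\rm rank}\,(p_q\,\one_{\overline{\Omega}}\,p_q)=\infty$; and $\nu_k(T)$ is the square of a singular value of $S$.

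The heart of the matter is the two-sided estimate $\ln\nu_k(T)=2\ln\tau_k(\overline{\Omega})-\ln c_k+o(k)$ as $k\to\infty$, where $\tau_k(\overline{\Omega})$ is a Chebyshev/Fekete-type extremal constant of $\overline{\Omega}$ with $\tau_k(\overline{\Omega})^{1/k}\to\gC(\overline{\Omega})$; since $\ln c_k=k\ln k+k\ln(2/b)-k+O(\ln k)$ by Stirling, this rearranges into $\ln\nu_k(T)=-k\ln k+\big(1+\ln(b\,\gC(\overline{\Omega})^{2}/2)\big)k+o(k)$, which is the assertion. For the \emph{upper bound} I would use the min--max principle with the codimension-$k$ subspace $\cN_k=\{g\in\cF:g(\zeta_1)=\cdots=g(\zeta_k)=0\}$ attached to a $k$-point Fekete set $\{\zeta_j\}\subset\overline{\Omega}$; each $g\in\cN_k$ factors as $g=F_k h$ with $F_k(z)=\prod_{j}(z-\zeta_j)$ and $h\in\cF$, so that $\langle Tg,g\rangle/\|g\|^{2}\le\|F_k\|_{L^{\infty}(\overline{\Omega})}^{2}\,\|h\|^{2}/\|F_k h\|^{2}$. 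One then establishes $\|F_k h\|^{2}\ge c_k\,e^{-o(k)}\|h\|^{2}$ for every $h\in\cF$ by keeping only the Gaussian mass on $\{|z|>\sqrt{2k/b}\}$, where $|F_k(z)|^{2}\gtrsim|z|^{2k}$, and by using the reproducing-kernel bound together with the holomorphy of $h$ to prevent $h$ from concentrating near the zeros $\zeta_j\in\overline{\Omega}$ of $F_k$; combined with $\|F_k\|_{L^{\infty}(\overline{\Omega})}=\gC(\overline{\Omega})^{k}e^{o(k)}$, this gives the upper half. For the \emph{lower bound} I would test with the $(k+1)$-dimensional span of the orthonormal polynomials $P_0,\dots,P_k$ of the measure $e^{-b|z|^{2}/2}dA$ on $\overline{\Omega}$ (with $\deg P_j=j$ and leading coefficients $\kappa_j>0$): for $P$ in that span $\langle TP,P\rangle=\int_{\overline{\Omega}}|P|^{2}e^{-b|z|^{2}/2}$, while $\|P\|^{2}=\int_{\C}|P|^{2}e^{-b|z|^{2}/2}$ is bounded, via the Bernstein--Markov property of $e^{-b|z|^{2}/2}dA$ on $\overline{\Omega}$, by $c_k\,e^{o(k)}\,\gC(\overline{\Omega})^{-2k}\,\langle TP,P\rangle$; then $\kappa_k^{1/k}\to\gC(\overline{\Omega})^{-1}$ yields the lower half.

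The Lipschitz hypothesis on $\partial\Omega$ enters precisely to secure the potential-theoretic inputs: $\overline{\Omega}$ is non-thin at each of its points, so $\gC(\overline{\Omega})=\gC(\partial\Omega)>0$ and the exterior Green's function $g_{\C\setminus\overline{\Omega}}(\cdot,\infty)$ is continuous up to, and vanishes on, $\partial\Omega$; the equilibrium and Fekete measures are well behaved, the measure $e^{-b|z|^{2}/2}dA|_{\overline{\Omega}}$ has the Bernstein--Markov property, and the Chebyshev/Fekete numbers satisfy $\tau_k(\overline{\Omega})^{1/k}\to\gC(\overline{\Omega})$ with the requisite uniformity. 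I expect the main obstacle to be the \emph{sharp} constant-matching in the displayed estimate --- obtaining $\gC(\overline{\Omega})$ rather than a crude surrogate such as $\max_{z\in\overline{\Omega}}|z|$ --- which forces one to replace circles by level sets of $g_{\C\setminus\overline{\Omega}}(\cdot,\infty)$ in the upper bound, to build genuinely near-extremal polynomial subspaces in the lower bound, and to control the Gaussian far field throughout while keeping all $o(k)$ errors uniform for merely Lipschitz $\overline{\Omega}$; this is where the technical weight of \cite{FiPu06} lies.
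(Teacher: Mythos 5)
You should first be aware that the paper contains no proof of this proposition: it is quoted as \cite[Lemma 2]{FiPu06}, so your sketch can only be measured against that source, whose strategy your $q=0$ argument does follow in outline (Fock--Bargmann reduction, Toeplitz operator $T=S^*S$, Fekete/Chebyshev polynomials for the upper bound, orthonormal polynomials on $\overline{\Omega}$ for the lower bound), and your Stirling bookkeeping correctly converts $\tau_k(\overline{\Omega})^{1/k}\to\gC(\overline{\Omega})$ into \eqref{fin9} (whose left-hand side should indeed be read as $\ln\nu_k$). Two points in the $q=0$ part are, however, only gestured at. Your key inequality $\|F_k h\|^2\ge c_k e^{-o(k)}\|h\|^2$ is true, but the reason you give is off-target: the danger is not ``$h$ concentrating near the zeros''; what saves the estimate is that the region $\{|z|>\sqrt{2k/b}\}$ and the weight $|z|^{2k}e^{-b|z|^2/2}$ are rotation invariant, so expanding $h$ in monomials kills all cross terms and reduces the claim to $c_{k+m}/c_m\ge c_k/c_0$, after using $|F_k(z)|\ge(|z|-R)^k=e^{-o(k)}|z|^k$ on that region (alternatively, one can avoid this operator bound altogether, as in \cite{RaWa02} and \cite{FiPu06}, by estimating $g=f/F_k$ on $\overline{\Omega}$ via the maximum principle on a circle of radius $\sim\sqrt{2k/b}$ together with $|f(z)|^2\le K(z,z)\|f\|^2$). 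In the lower bound, the Bernstein--Markov property alone does not bound $\|P\|^2_{\cF}$ by $e^{o(k)}c_k\,\gC(\overline{\Omega})^{-2k}\int_{\overline{\Omega}}|P|^2e^{-b|z|^2/2}$; you also need the Bernstein--Walsh inequality (or, equivalently, the leading-coefficient asymptotics $\kappa_k^{1/k}\to\gC(\overline{\Omega})^{-1}$ for regular compacts) to control $|P|$ off $\overline{\Omega}$ against the Gaussian, and the assembly of these two inputs should be made explicit.

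The genuine gap is the case $q\ge1$, which is part of the statement. The parenthetical reduction via $p_q(\bx,\by)={\rm L}_q\bigl(\tfrac b2|\bx-\by|^2\bigr)p_0(\bx,\by)$ ``differing from $p_0$ by a bounded factor on $\overline{\Omega}$'' proves nothing: a pointwise bounded multiplicative relation between integral kernels yields no operator inequality and no comparison of eigenvalues between $p_q\,\one_{\overline{\Omega}}\,p_q$ and $p_0\,\one_{\overline{\Omega}}\,p_0$ (which act in different subspaces), and the factor ${\rm L}_q$ vanishes and changes sign as soon as ${\rm diam}\,\Omega$ is not small. Nor does ``the same scheme with $\varphi_{j,q}$ in place of the monomials'' run by itself: for $q\ge1$ the range of $p_q$ is not of the form (holomorphic)$\times$(Gaussian), so the factorization $f=F_k h$, the maximum principle, and the orthogonal-polynomial machinery are not available, and asserting that the two leading terms are $q$-independent ``because only the capacity enters'' is circular. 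This is exactly where \cite{FiPu06} needs a separate argument for the higher levels, exploiting the intertwining by $(a^*)^q$ (equivalently the polyanalytic structure of ${\rm Ran}\,p_q$) together with a bracketing between slightly shrunken and enlarged domains whose capacities converge to $\gC(\overline{\Omega})$ --- the same kind of approximation that the present paper performs in Proposition \ref{pfin1} and Corollary \ref{corfin1}, and one of the places where the Lipschitz hypothesis is used. As written, your proposal proves the proposition only for $q=0$.
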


    \subsubsection{Proof of Theorem \ref{nth1}} Pick a sequence of domains $\Omega_j \subset \rd$ with Lipschitz boundaries $\partial \Omega_j$ such that $\overline{\Omega}_j \subset \Omega$ and \eqref{fin8} holds true; the existence of such a sequence is guaranteed by Corollary \ref{corfin1}.  Set
    $$
    m_j^- : = \inf_{\bx \in \Omega_j}\,\zeta(\bx), \quad j \in \N, \quad m^+ : = \sup_{\bx \in \Omega}\zeta(\bx).
    $$
    Evidently, $0 < m_j^- \leq m^+ < \infty$. Moreover,
   \bel{n109}
    m_j^- \, \one_{\overline{ \Omega}_j}(\bx) \leq \zeta(\bx) \leq  m^+ \, \one_{\overline{\Omega}}(\bx), \quad \bx \in \rd, \quad j \in \N.
    \ee
    By the mini-max principle, estimate \eqref{n109} implies
   \bel{n108}
    m_j^- \, \nu_k(p_r \, \one_{\overline{\Omega}_j} \, p_r) \leq \nu_k(p_r \zeta p_r) \leq  m^+ \, \nu_k(p_r \, \one_{\overline{\Omega}} \, p_r), \quad k \in \Z_+.
    \ee
    By \eqref{fin9} and \eqref{n108}, we get
    $$
    1 + \ln{\left(\frac{b\gC(\overline{\Omega}_j)^2}{2}\right)} \leq  \liminf_{k \to \infty} \frac {\ln{\nu_k(p_r \, \zeta \, p_r)} + k\ln{k} }{k} \leq
    $$
    \bel{n110}
    \limsup_{k \to \infty} \frac {\ln{\nu_k(p_r \, \zeta \, p_r)} + k\ln{k} }{k} \leq  1 + \ln{\left(\frac{b\gC(\overline{\Omega})^2}{2}\right)},
    \ee
    for every $j$. Combining \eqref{n110} and \eqref{fin8}, we obtain
     \bel{n111}
   \ln{\nu_k(p_r \, \zeta \, p_r)} = - k\ln{k} + \left(1 + \ln{\left(\frac{b\gC(\overline{\Omega})^2}{2}\right)}\right)k  + o(k), \quad k \to \infty.
   \ee
   Now \eqref{1} follows from \eqref{n60a} and \eqref{n111}.

\subsubsection{Proof of Theorem \ref{nth2}}
For  $\delta \in \re$, $c_0>0$, $c_1 \in \re$, and $R>0$,  set
$$
\chi_{\delta, c_0, c_1, R}(\bx) : = c_0 |\bx|^\delta e^{-\gamma|\bx|^{2\beta}}\one_{\rd \setminus B_R}(\bx) + c_1 \one_{B_R}(\bx), \quad \bx \in \rd,
$$
where
$\beta>0$, $\gamma>0$, are the parameters introduced in the statement of the theorem.
Arguing as in the proof of \cite[Theorem 2.2]{LuRa15}, we can show that there exist $\delta_< \leq \delta_> \in \re$, $0 \leq c_{0,<} \leq c_{0,>}$,
$c_{1,<} \leq c_{1,>} \in \re$ and $R>0$, such that
\bel{n77}
    \nu_k(p_0 \,\chi_{\delta_<, c_{0,<}, c_{1,<}, R} \, p_0) \leq \nu_k(p_r \, \zeta \, p_r) \leq  \nu_k(p_0 \, \chi_{\delta_>, c_{0,>}, c_{1,>}, R} \, p_0)), \quad k \in \Z_+.
    \ee
    Since the functions $\chi_{\delta, c_0, c_1, R}$ are radial, we easily check that the eigenvalues of the operator $p_0 \,\chi_{\delta, c_{0}, c_{1}, R} \, p_0 : {\rm Ran}\,p_0 \to {\rm Ran}\,p_0$ coincide with the numbers
    $$
    \langle \chi_{\delta, c_0, c_1, R}\,\varphi_{k,0}, \varphi_{k,0}\rangle_{L^2(\rd)} = \frac{1}{k!} \left((2/b)^{\delta/2} c_0\int_\rho^\infty t^{k+\delta/2} e^{-\mu t^\beta -t} dt + c_1 \int_0^\rho e^{-t} t^{k} dt\right), \quad k \in \Z_+,
    $$
    with
    $\mu =  (2/b)^\beta \gamma$ and $\rho = bR^2/2$.
    Applying \eqref{n77} and \cite[Lemma 5.3]{LuRa15}, we find that
    \bel{o5a}
    \ln{\nu_k(p_r \, \zeta \, p_r)} = \left\{
    \begin{array} {l}
    -\sum_{1 \leq j < \frac{1}{1-\beta}} f_j k^{(\beta-1)j+1} +\cO(\ln{k}) \quad {\rm if} \quad \beta \in (0,1), \\[2mm]
    -\left(\ln{(1+\mu)}\right)\,k + \cO(\ln{k}) \quad {\rm if} \quad \beta = 1, \\[2mm]
    -\frac{\beta-1}{\beta} k \ln{k} + k \frac{\beta-1-\ln{(\mu\beta)}}{\beta} \\[2mm]
    -\sum_{1 \leq j < \frac{\beta}{\beta-1}} g_j k^{(\frac{1}{\beta}-1)j+1} +\cO(\ln{k}) \quad {\rm if} \quad \beta \in (1,\infty),
    \end{array}
    \right.
    \ee
    as $k \to \infty$, the coefficients $f_j$ and $g_j$ being introduced in the statement of Theorem \ref{nth2}.  Now asymptotic relations \eqref{4} -- \eqref{6}  follow from \eqref{n60a}, and \eqref{o5a}.

\subsection{Proof of Theorem \ref{oth1}}
    \label{ss43}

    Our first step, Proposition \ref{op1} below, reduces the asymptotic analysis of $\cN_q^>(\lambda; \opwv)$ and $\cN_q^<(\lambda; \opwv)$ as $\lambda \downarrow 0$, to the eigenvalue asymptotics for the Toeplitz operator $\cT_q(\cV)$, $q \in \Z_+$. In fact, we formulate Proposition \ref{op1} in a more general setting.

    \begin{proposition} \label{op1}
    Let $S = S^* \in \gB(L^2(\rd))$ such that the operator $S H_0^{-1}$ is compact. Let
    $T: = \Lambda_q \left(|{\rm Re}\,(S H_0^{-1})| + |{\rm Im}\,(S H_0^{-1})|\right)$.
    Then for any $q \in \Z_+$ and $\veps > 0$ we have
    \bel{o3}
    n_+(\lambda; p_q(S - \veps |T|)p_q) + \cO_{q,\veps}(1) \leq \cN_q^+(\lambda; S) \leq n_+(\lambda; p_q(S + \veps |T|)p_q) + \cO_{q,\veps}(1),
    \ee
    \bel{o4}
    n_-(\lambda; p_q(S + \veps |T|)p_q) + \cO_{q,\veps}(1) \leq \cN_q^-(\lambda; S) \leq n_-(\lambda; p_q(S - \veps |T|)p_q) + \cO_{q,\veps}(1),
    \ee
    as $\lambda \downarrow 0$.
    \end{proposition}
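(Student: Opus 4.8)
The plan is to prove \eqref{o3}--\eqref{o4} by a Birman--Schwinger reduction, in the spirit of the proof of Proposition~\ref{np7}, now arranged so as to allow a sign-indefinite $S$. The essential new point is that the ``off-diagonal'' part of the reduction can be estimated by $\veps\,p_q|T|p_q$ modulo a finite-rank operator; this is precisely what forces $T$ to be built from the real and imaginary parts of $SH_0^{-1}$ rather than from $SH_0^{-1}$ itself. Throughout, set $A:={\rm Re}\,(SH_0^{-1})$, $B:={\rm Im}\,(SH_0^{-1})$; then $A=A^{*}$, $B=B^{*}$, $SH_0^{-1}=A+\I B$, both $A$ and $B$ are compact by hypothesis, and $T=\Lambda_q(|A|+|B|)\ge0$, so that $|T|=T$ and $p_q|T|p_q$ is compact. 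Since $p_qH_0^{-1}=\Lambda_q^{-1}p_q$, we have
\[
Sp_q=\Lambda_q\,(SH_0^{-1})\,p_q=\Lambda_q(A+\I B)p_q,\qquad p_qS=\Lambda_q\,p_q\,(SH_0^{-1})^{*}=\Lambda_q\,p_q(A-\I B),
\]
so that $p_qSp_q=\Lambda_q\,p_qAp_q$ is compact and every matrix element of the reduction between ${\rm Ran}\,p_q$ and its complement can be rewritten through $A$ and $B$.

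First I would apply the generalized Birman--Schwinger principle (\cite{AlAvDeHe94}, \cite{Bi91}) to express, for $\lambda\in(0,b)$ and modulo a $\lambda$-independent $\cO(1)$, the counting functions $\cN_q^{\pm}(\lambda;S)$ through the eigenvalue counting of a self-adjoint operator built from $S$ and the resolvent of $H_0$ at the threshold $\Lambda_q\mp\lambda$; the arithmetic-progression structure of $\sigma(H_0)$ (the gaps adjoining $\Lambda_q$ have length $2b>b$, and ${\rm sgn}\,(H_0-\Lambda_q\mp\lambda)$ is constant on them) makes this available on the whole interval at issue. Splitting
\[
(H_0-\Lambda_q\mp\lambda)^{-1}=\mp\lambda^{-1}p_q+(I-p_q)(H_0-\Lambda_q\mp\lambda)^{-1},
\]
whose second summand is uniformly bounded and norm-convergent as $\lambda\downarrow0$, and eliminating the part of the problem living on ${\rm Ran}\,(I-p_q)$ (a Schur-complement/second-resolvent step, needed here — unlike in the sign-definite case — because of cross terms of order $\lambda^{-1/2}$), I expect to obtain
\[
\cN_q^{+}(\lambda;S)=n_+\big(\lambda;\,p_qSp_q+R_+(\lambda)\big)+\cO_\veps(1),\qquad \cN_q^{-}(\lambda;S)=n_-\big(\lambda;\,p_qSp_q+R_-(\lambda)\big)+\cO_\veps(1),
\]
as $\lambda\downarrow0$, where $R_\pm(\lambda)=R_\pm(\lambda)^{*}$ comes from the coupling of ${\rm Ran}\,p_q$ with ${\rm Ran}\,(I-p_q)$ and satisfies, for a constant $c=c_q$ independent of $\lambda$,
\[
-\,c\,\Lambda_q^{2}\,p_q(A-\I B)(A+\I B)p_q\ \le\ R_\pm(\lambda)\ \le\ c\,\Lambda_q^{2}\,p_q(A-\I B)(A+\I B)p_q .
\]
The finitely many ``resonant'' threshold values at which this normal form degenerates are discrete and contribute only to the $\cO_\veps(1)$.

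The heart of the matter is then to dominate the right-hand side above by $\veps\,p_q|T|p_q$ modulo finite rank. Expanding $(A-\I B)(A+\I B)=A^{2}+B^{2}+\I[A,B]$ and using the elementary inequality $\pm\I[A,B]\le A^{2}+B^{2}$ (a consequence of $(A\mp\I B)^{*}(A\mp\I B)\ge0$) gives $p_q(A-\I B)(A+\I B)p_q\le 2\,(p_qA^{2}p_q+p_qB^{2}p_q)$. Since $A$ and $B$ are compact, for each $\delta>0$ there are finite-rank operators $\Pi',\Pi''$ with $A^{2}\le\delta|A|+\Pi'$ and $B^{2}\le\delta|B|+\Pi''$ (spectral truncation of the compact operators $|A|$, $|B|$); choosing $\delta:=\veps/(2c\Lambda_q)$ we obtain
\[
\pm\,R_\pm(\lambda)\ \le\ \veps\,\Lambda_q\,p_q(|A|+|B|)p_q+\Pi_{q,\veps}\ =\ \veps\,p_q|T|p_q+\Pi_{q,\veps},
\]
with $\Pi_{q,\veps}$ of finite rank, uniformly for small $\lambda$. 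Hence $p_q(S-\veps|T|)p_q-\Pi_{q,\veps}\le p_qSp_q+R_\pm(\lambda)\le p_q(S+\veps|T|)p_q+\Pi_{q,\veps}$, and \eqref{o3}--\eqref{o4} follow by combining these operator inequalities with the min--max principle and the Weyl inequalities \eqref{o30} (used to absorb $\Pi_{q,\veps}$ into the remainder); for \eqref{o4} one uses in addition that $n_-(\,\cdot\,;\cdot)$ is anti-monotone in the operator argument, which is exactly why the roles of $S-\veps|T|$ and $S+\veps|T|$ get interchanged relative to \eqref{o3}.

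The step I expect to be the main obstacle is making this Birman--Schwinger/Schur reduction rigorous for a sign-indefinite $S$ — in particular identifying the effective operator $p_qSp_q+R_\pm(\lambda)$ and the error $\cO_\veps(1)$ honestly, and controlling the exceptional threshold values — and, above all, establishing the uniform bound $\|R_\pm(\lambda)\|\le c_q\,\Lambda_q^{2}\|SH_0^{-1}\|^{2}$ for all small $\lambda$ (this requires the reduced resolvent entering $R_\pm(\lambda)$ to be bounded uniformly away from the resonant values). Once this uniformity is in place, the passage to \eqref{o3}--\eqref{o4} via \eqref{o30} is routine. The reason $T$ involves $|A|+|B|$ and not $|SH_0^{-1}|$ is exactly the non-self-adjointness of $SH_0^{-1}=A+\I B$: the remainder is naturally dominated by $(SH_0^{-1})^{*}(SH_0^{-1})=A^{2}+B^{2}+\I[A,B]$, and it is the bound $\pm\I[A,B]\le A^{2}+B^{2}$ together with the compactness of $A$ and $B$ that lets us absorb everything into $\veps(|A|+|B|)$ up to finite rank.
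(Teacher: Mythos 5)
The paper offers no detailed argument for Proposition \ref{op1}: the proof is declared standard and delegated to \cite[Section 5]{Ra90}. Your sketch follows exactly that route (reduction to an effective operator on ${\rm Ran}\,p_q$, then absorption of the coupling with the complementary subspace into $\veps\,p_q T p_q$ modulo finite rank), and the part that is specific to this proposition is done correctly and completely: from $p_qH_0^{-1}=\Lambda_q^{-1}p_q$ you get $Sp_q=\Lambda_q(A+\I B)p_q$ with $A={\rm Re}\,(SH_0^{-1})$, $B={\rm Im}\,(SH_0^{-1})$, the quadratic remainder is dominated by $\Lambda_q^2\,p_q(A-\I B)(A+\I B)p_q$, the inequality $\pm \I[A,B]\le A^2+B^2$ gives the bound by $2\Lambda_q^2\,p_q(A^2+B^2)p_q$, and the spectral truncations $A^2\le\delta|A|+{\rm finite\ rank}$, $B^2\le\delta|B|+{\rm finite\ rank}$ (compactness of $A,B$) produce $\veps\,p_qTp_q$ up to finite rank; the passage to \eqref{o3}--\eqref{o4} via the mini-max principle, the Weyl inequalities \eqref{o30}, and the rank bound for finite-rank perturbations, including the interchange of roles for $n_-$, is then routine. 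This is precisely the mechanism that explains the otherwise odd-looking definition of $T$.

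Two points keep the proposal from being a proof. First, the central reduction $\cN_q^{\pm}(\lambda;S)=n_\pm\bigl(\lambda;\,p_qSp_q+R_\pm(\lambda)\bigr)+\cO_{q,\veps}(1)$ with the stated two-sided bound on $R_\pm(\lambda)$ is only announced (``I expect to obtain''), and this reduction is exactly the body of the omitted standard argument; the absorption step you carry out is short by comparison. Second, the uniformity claim $\|R_\pm(\lambda)\|\le c_q\Lambda_q^2\|SH_0^{-1}\|^2$ for \emph{all} small $\lambda$ can fail as stated: writing $Q_q:=I-p_q$, the reduced resolvent $\bigl[Q_q(H_0+S-\Lambda_q\mp\lambda)Q_q\bigr]^{-1}$ on ${\rm Ran}\,Q_q$ blows up as $\lambda\downarrow0$ whenever $\Lambda_q$ is an (isolated, finite-multiplicity) eigenvalue of $Q_q(H_0+S)Q_q\big|_{{\rm Ran}\,Q_q}$ --- this degeneration occurs at the endpoint $\lambda=0$ itself, not at one of your finitely many interior ``resonant'' values, so your remark does not dispose of it. The standard repair is either to first split off the finite-dimensional spectral subspace of $Q_q(H_0+S)Q_q\big|_{{\rm Ran}\,Q_q}$ associated with a neighbourhood of $\Lambda_q$ (a finite-rank modification, hence absorbed in $\cO_{q,\veps}(1)$), or to avoid inverting the block altogether by the decoupling inequality $\pm\bigl(p_qSQ_q+Q_qSp_q\bigr)\le \veps^{-1}\,p_qSQ_qSp_q+\veps\,Q_q$ combined with the mini-max principle, which is closer to how the argument of \cite[Section 5]{Ra90} actually runs; with either fix your domination step applies verbatim and yields \eqref{o3}--\eqref{o4}.
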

    We omit the standard proof which follows the general lines of \cite[Section 5]{Ra90}. \\
    Now note that under the hypotheses of Proposition \ref{op1}, the Weyl inequalities \eqref{o30} and the mini-max principle easily imply
    \bel{n500}
     n_\pm(\lambda; p_q(S \mp \veps |T|)p_q) \geq n_\pm(\lambda(1+\eta); p_q S p_q) - 2 n_+(\lambda^2 \eta^2 \veps^{-2}; (1 + \Lambda^{-1}_0 \Lambda_q)^2 p_q S^2 p_q),
     \ee
     \bel{n501}
     n_\pm(\lambda; p_q(S \pm \veps |T|)p_q) \leq n_\pm(\lambda(1-\eta); p_q S p_q) + 2 n_+(\lambda^2 \eta^2 \veps^{-2}; (1 + \Lambda^{-1}_0 \Lambda_q)^2 p_q S^2 p_q),
     \ee
     for any $\lambda>0$, $\veps>0$, and $\eta \in (0,1)$. Combining \eqref{o3} - \eqref{o4} and \eqref{n500} - \eqref{n501} with $S = \opwv$, and bearing in mind Corollary \ref{nf1}, we obtain

     \becol{nf600}
     Under the hypotheses of Theorem \ref{oth1} there exists a constant $C_1>0$ such that
      $$
     n_-(\lambda(1+\eta); \opw(\vbq)) - 2 n_+(C_1 \lambda^2 \eta^2 \veps^{-2}; \opw(\wbq)) + \cO_{\veps, q}(1) \leq
     $$
     $$
     \cN_q^<(\lambda) \leq
     $$
     \bel{n502}
      n_-(\lambda(1-\eta); \opw(\vbq)) + 2 n_+(C_1 \lambda^2 \eta^2 \veps^{-2}; \opw(\wbq)) + \cO_{\veps, q}(1),
      \ee
     $$
     n_+(\lambda(1+\eta); \opw(\vbq)) - 2 n_+(C_1 \lambda^2 \eta^2 \veps^{-2}; \opw(\wbq)) + \cO_{\veps, q}(1) \leq
     $$
     $$
     \cN_q^>\lambda) \leq
     $$
     \bel{n503}
      n_+(\lambda(1-\eta); \opw(\vbq)) + 2 n_+(C_1 \lambda^2 \eta^2 \veps^{-2}; \opw(\wbq)) + \cO_{\veps, q}(1),
      \ee
      \eco

    Our next goal is to study the asymptotics  of   $n_\pm(\lambda; \opw(\vbq))$ and $n_+(\lambda; \opw(\wbq))$ as $\lambda \downarrow 0$. To this end, we will apply the approach developed in \cite{DaRo87}.

    \begin{proposition} \label{op2}
    Let $\cF = \overline{\cF} \in \cS^{-\gamma}_\rho(\re^{2n})$ for some $n \in \N$, $\gamma > 0$, and $\rho \in (0,1]$. Assume that the functions $\gV_n^+(\cdot;\cF)$  and $\gV_n^-(\cdot;\cF)$  satisfy the condition $\cC$, and
     $$\liminf_{\lambda \downarrow 0} \lambda^{2n/\gamma} \gV_n^\pm(\lambda; \cF) > 0.$$  Then there exists $\delta > 0$ such that
    $$
    n_\pm(\lambda; \opw(\cF)) = \gV_n^\pm(\lambda; \cF) (1 + \cO(\lambda^\delta)), \quad \lambda \downarrow 0.
    $$
    \end{proposition}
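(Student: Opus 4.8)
The assertion is, modulo the power-rate remainder, the Weyl asymptotic formula of \cite{DaRo87} for pseudo-differential operators of negative order; the plan is to adapt that argument while keeping enough quantitative control to reach the $\cO(\lambda^\delta)$ bound, and, in the application to Theorem \ref{oth1}, to feed it (applied to $v_{b,q}$ and to $w_{b,q}$) into the two-sided estimates of Corollary \ref{nf600} in the spirit of \cite{Ra90}. First I would note that under the present hypotheses $\opw(\cF)$ is compact and self-adjoint on $L^2(\re^n)$ (Propositions \ref{Dprvbp1}, \ref{Dprvbp1b}, together with $\cF=\overline{\cF}$), so $n_\pm(\lambda;\opw(\cF))$ is a finite, non-increasing function of $\lambda>0$; moreover the substitution $\cF\mapsto-\cF$ interchanges $n_+$ with $n_-$ and $\gV_n^+$ with $\gV_n^-$, so it suffices to analyse $n_+$.

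It is convenient to first rewrite $\opw(\cF)$ in a form that makes the harmonic-oscillator scale explicit. Since $\cF\in\cS^{-\gamma}_\rho(\re^{2n})$, the symbol $b:=\jap{z}^{\gamma}\cF$ belongs to $\cS^{0}_\rho(\re^{2n})$, is bounded and real, and tends to $0$ at infinity; with $Q:=\opw(\jap{z}^{-\gamma/2})$, a self-adjoint elliptic operator of order $-\gamma/2$ whose square is comparable to $(-\Delta+|x|^2+1)^{-\gamma/2}$, the Weyl calculus gives $Q\,\opw(b)\,Q=\opw(\cF)+\opw(\cR)$ with $\cR\in\cS^{-\gamma-2\rho}_\rho(\re^{2n})$. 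A form comparison $\opw(\cR)\le C\,\opw(\jap{z}^{-\gamma-2\rho})+(\text{lower order})$, combined with the elementary spectral asymptotics of the harmonic oscillator, yields the crude a priori bound $n_+(s;\opw(\cR))=\cO(s^{-2n/(\gamma+2\rho)})$ as $s\downarrow0$; since $2n/(\gamma+2\rho)<2n/\gamma$, this is of strictly smaller order than the expected main term. By the Weyl inequalities \eqref{o30}, for every $\eta\in(0,1)$,
$$n_+\big((1+\eta)\lambda;Q\opw(b)Q\big)-\cO\big((\eta\lambda)^{-\frac{2n}{\gamma+2\rho}}\big)\le n_+(\lambda;\opw(\cF))\le n_+\big((1-\eta)\lambda;Q\opw(b)Q\big)+\cO\big((\eta\lambda)^{-\frac{2n}{\gamma+2\rho}}\big).$$

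It remains to prove the Weyl law for the self-adjoint operator $A:=Q\opw(b)Q$, whose Weyl symbol equals $\cF$ modulo a symbol of order $-\gamma-2\rho$. Here I would run the phase-space microlocalisation of \cite{DaRo87}: the region $\{\cF\asymp\lambda\}$ is, up to the rescaling concentrating it near $\{|z|\asymp1\}$ — equivalently, up to introducing the effective semiclassical parameter $h\asymp\lambda^{2/\gamma}$ — covered by boxes on which $\cF$ is nearly constant; one counts a full local phase-space volume on the boxes where this constant exceeds $(1\mp\eta)\lambda$, nothing on the remaining ones, and controls the measure of the transition layer. Combining this with a smooth functional calculus — $\one_{(\lambda,\infty)}$ is sandwiched between non-decreasing $\chi^\pm_\eta\in C^\infty(\re)$ that vanish near $0$, equal $1$ near $\lambda$, and satisfy $\supp(\chi^\pm_\eta)'\subset[(1-\eta)\lambda,(1+\eta)\lambda]$ — and with \eqref{d10} for the trace of a Weyl operator with integrable symbol, one arrives at
$$\gV_n^+\big((1+\eta)\lambda\big)-\cO\big(E(\lambda,\eta)\big)\le n_+(\lambda;\opw(\cF))\le\gV_n^+\big((1-\eta)\lambda\big)+\cO\big(E(\lambda,\eta)\big),$$
where $E(\lambda,\eta)$ collects the symbolic remainders, the contribution of $\opw(\cR)$, and the transition layer, and is bounded by $\eta^{-N_0}$ times a fixed power of $\lambda$ that is larger than the power $\lambda^{-2n/\gamma}$ governing the main term, for some fixed integer $N_0$.

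To conclude, condition $\cC$ for $\gV_n^+$ gives, on integrating $-\lambda(\gV_n^+)'(\lambda)<\gamma_2\gV_n^+(\lambda)$ over $[(1-\eta)\lambda,(1+\eta)\lambda]$ and using $\gV_n^+((1-\eta)\lambda)\le2^{\gamma_2}\gV_n^+(\lambda)$, that $\gV_n^+\big((1\pm\eta)\lambda\big)=\gV_n^+(\lambda)(1+\cO(\eta))$, and it also traps $\gV_n^+(\lambda)$ between two fixed powers of $\lambda^{-1}$; together with the hypothesis $\liminf_{\lambda\downarrow0}\lambda^{2n/\gamma}\gV_n^+(\lambda)>0$ this pins the main term at size $\asymp\lambda^{-2n/\gamma}$, whence $E(\lambda,\eta)=\cO\big(\eta^{-N_0}\lambda^{\sigma}\big)\gV_n^+(\lambda)$ for some $\sigma>0$. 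Choosing $\eta=\lambda^{\sigma/(2N_0)}$ then gives $n_+(\lambda;\opw(\cF))=\gV_n^+(\lambda)(1+\cO(\lambda^{\delta}))$ with $\delta=\delta(\gamma,\rho,n,\gamma_1,\gamma_2)>0$, and the symmetry $\cF\mapsto-\cF$ settles $n_-$. The main obstacle is precisely this quantitative remainder: because $\cF$ is of negative order rather than order-zero elliptic there is no built-in small parameter, so one must either carry the phase-space partition of \cite{DaRo87} through with explicit control of every error, or first perform the genuine semiclassical rescaling with $h\asymp\lambda^{2/\gamma}$ and then invoke the $h$-pseudo-differential functional calculus; in either case the delicate point is to balance the $\cO((\eta\lambda)^{-N_0})$ growth of the derivatives of the cut-offs $\chi^\pm_\eta$ against the $\cO(h^{\rho})$ gain of the symbolic remainders on the relevant region, which is exactly what forces the cut-off parameter $\eta=\eta(\lambda)\to0$ to be a suitable power of $\lambda$ and makes condition $\cC$ indispensable.
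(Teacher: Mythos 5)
Your strategy is sound and ultimately rests on exactly the same source as the paper, but you take a longer route than necessary. The paper's proof is a one-line application of the main theorem of \cite{DaRo87}: choosing the weight functions $\varphi(x,\xi)=\phi(x,\xi)=(1+|x|^2+|\xi|^2)^{\varrho/2}$ and $m(x,\xi)=(1+|x|^2+|\xi|^2)^{-\gamma/2}$ places $\cS^{-\gamma}_\rho(\re^{2n})$ inside the admissible symbol classes of that paper, and its hypotheses on the volume function are precisely condition $\cC$ together with $\liminf_{\lambda\downarrow 0}\lambda^{2n/\gamma}\gV_n^\pm(\lambda;\cF)>0$, so the conclusion, including the $\cO(\lambda^\delta)$ remainder, is that theorem quoted verbatim. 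You instead set out to re-derive this Weyl law (the factorization $Q\,\opw(b)\,Q$, the phase-space partition, the cut-offs $\chi^\pm_\eta$, the optimisation $\eta=\eta(\lambda)$), which is a legitimate programme; but as written the decisive quantitative step --- the microlocal counting and the bound $E(\lambda,\eta)=\cO(\eta^{-N_0}\lambda^{\sigma})\,\gV_n^+(\lambda;\cF)$ --- is itself only asserted by appeal to the method of \cite{DaRo87}, so the rigor of your argument reduces to the same citation while adding steps that need their own justification: the comparison for $\opw(\cR)$ and the estimate $n_\pm(s;\opw(\cR))=\cO\bigl(s^{-2n/(\gamma+2\rho)}\bigr)$ require a separate rough Weyl or singular-value bound for operators of order $-\gamma-2\rho$ in the Shubin calculus, and the exponent $\sigma$ must be tracked explicitly to extract $\delta$. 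Your use of condition $\cC$ (giving $\gV_n^+((1\pm\eta)\lambda;\cF)=\gV_n^+(\lambda;\cF)(1+\cO(\eta))$ and trapping the main term between powers of $\lambda^{-1}$) and of the liminf hypothesis to make all errors relatively $\cO(\lambda^{\sigma})$ is correct and is indeed exactly the role these hypotheses play in matching the cited theorem; the cleaner course is simply to verify that matching and quote \cite{DaRo87}, as the paper does.
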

    Proposition \ref{op2} follows from the main theorem of \cite{DaRo87} with $\varphi(x,\xi) = \phi(x,\xi) : = (1+|x|^2 + |\xi|^2)^{\varrho/2}$ and
    $m(x,\xi) : = (1+|x|^2 + |\xi|^2)^{-\gamma/2}$, $(x,\xi) \in \re^{2n}$.\\ Let us now prove the validity of \eqref{o11}.
     By Proposition \ref{op2},
    \bel{o23}
    n_+(s; \opw(v_{b,q})) = \gV_1^+(s; v_{b,q}) (1 + o(1)), \quad s \downarrow 0.
    \ee
    Since $\gV_1(\cdot;v_{b,q})$ satisfies by assumption condition $\cC$, we find that
    \bel{o24}
    (1+\eta)^{-\gamma_1} \gV_1^+(\lambda; v_{b,q}) \leq \gV_1^+((1+\eta)\lambda; v_{b,q}), \quad
    \gV_1^+((1-\eta)\lambda; v_{b,q}) \leq (1-\eta)^{-\gamma_2} \gV_1^+(\lambda; v_{b,q}),
    \ee
    for any $\eta \in (0,1)$ and $\lambda > 0$. It is easy to check that our assumption $w_{b,q} \in \cS_\varrho^{-2\gamma}(\rd)$ implies the existence of a constant $C_2$ such that
    \bel{o17}
    n_+(s; \opw(w_{b,q})) \leq C_2 s^{-1/\gamma}
    \ee
    for $s>0$ small enough.
    Putting together \eqref{n502} - \eqref{n503} and \eqref{o23} -- \eqref{o17}, we find that there exists a constant $C_3$ such that for any $\eta \in (0,1)$ and $\veps > 0$ we have
    $$
    (1+\eta)^{-\gamma_2} - C_3 (\eta^2 \veps^{-2})^{-1/\gamma} \leq \liminf_{\lambda \downarrow 0} \frac{\cN_q(\lambda; \opwv)}{\gV_1^+(\lambda; v_{b,q})} \leq
    $$
    $$
    \limsup_{\lambda \downarrow 0} \frac{\cN_q(\lambda; \opwv)}{\gV_1^+(\lambda; v_{b,q})} \leq (1-\eta)^{-\gamma_1} + C_3 (\eta^2 \veps^{-2})^{-1/\gamma}.
    $$
     Choosing $\eta = \sqrt{\veps}$ and letting $\veps \downarrow 0$, we obtain \eqref{o11}.
    The proof of \eqref{o12} is  analogous.

    \pagebreak
    {\bf Acknowledgements}. The authors are very grateful to Thomas Ransford who gave in \cite{Ran19} the idea of the proof of Proposition \ref{pfin1}. Moreover, they thank Luchezar Stoyanov for a useful discussion of the details of this proof, as well as Dimiter Balabanski and Hajo Leschke for valuable comments on the applications of non-local potentials in nuclear physics. The partial
support of the Chilean Science Foundation {\em Fondecyt} under Grant 1170816 is gratefully acknowledged.

{\sc Esteban C\'ardenas, Georgi Raikov, Ignacio Tejeda}\\
Facultad de Matem\'aticas\\
Pontificia Universidad Cat\'olica de Chile\\
Av. Vicu\~na Mackenna 4860, Santiago de Chile\\
E-mails: encardenas@uc.cl, graikov@mat.uc.cl, ijtejeda@uc.cl\\

\end{document}